\newtheorem{theorem}{Theorem}
\newtheorem{lemma}{Lemma}
\newtheorem{definition}{Definition}
\newcommand{\theref}[1]{theorem~\ref{#1}} 
\newcommand{\lemref}[1]{lemma~\ref{#1}} 
\newcommand{\defref}[1]{definition~\ref{#1}}
\newcommand\RR{{\mathds{R}}}
\newcommand\CC{{\mathds{C}}}
\newcommand{\prs}{{\bm\sigma}}
\newcommand{\abs}[1]{\left| #1\right|}
\newcommand{\avg}[1]{\left\langle #1 \right\rangle}
\newcommand{\adj}{\mathrm{adj}}
\newcommand\one{{\mathds{1}}}
\newcommand\eps{{\varepsilon}}
\newcommand{\hilb}{{\mathcal S}}
\newcommand{\Hilb}{{\mathcal F}}
\newcommand{\setP}{\mathfrak{P}}
\newcommand{\setI}{\mathcal{I}}
\newcommand{\nodes}{\mathfrak{M}}
\newcommand{\modes}{\mathfrak{N}}
\newcommand{\Kodes}{\mathfrak{O}}
\newcommand{\kodes}{\mathfrak{o}}
\newcommand{\edges}{\mathfrak{A}}
\renewcommand{\graph}{\mathcal{G}}
\newcommand{\liouv}{\mathscr{L}}
\newcommand{\aherm}{\mathscr{A}}
\newcommand{\green}{\mathscr{G}}
\newcommand{\oK}{\overline{K}}
\newcommand{\oW}{\overline{W}}
\newcommand{\oJ}{\overline{J}}
\newcommand{\trsp}{T} 
\renewcommand{\le}{\leqslant}
\renewcommand{\ge}{\geqslant}
\newcommand{\tpe}{t_{\perp}}
\newcommand{\tpi}{t_{\parallel}}
\newcommand{\pb}[1]{\left[#1\right]} 
\newcommand{\vv}[1]{(#1)} 
\begin{document}

\title{Symmetries of a mean-field spin model}

\author{Rytis Pa\v{s}kauskas}
\address{National Institute for Theoretical Physics, Stellenbosch 7600, South Africa}
\ead{\mailto{rytis.paskauskas@gmail.com}}

\begin{abstract}
Thermodynamic limit evolution of a closed quantum Heisenberg-type spin model with mean-field interactions is characterized by classifying all the symmetries of the equations of motion. It is shown that parameters of the model induce a structure in the Hilbert space by partitioning it into invariant subspaces, decoupled by the underlying Bogoliubov-Born-Green-Kirkwood-Yvon ({BBGKY}) hierarchy. All possible partitions are classified in terms of a $3\times3$ matrix of effective, thermodynamic limit coupling constants. It is found that there are either $1$, $2$, $4$, or $O(N)$ invariant subspaces. The {BBGKY} hierarchy decouples into the corresponding number of anti-Hermitian operators on each subspace. These findings imply that equilibration and the equilibrium in this model depend on the initial conditions.
\end{abstract}

\pacs{75.10.Jm,76.60.Es,05.20Dd}

\section{Introduction}\label{s:intro}
Recent developments in ultracold atom and ion experiments have spurred renewed interest in dynamics of closed quantum many-body systems \cite{BlochDalibardZwerger08,Kinoshita_etal06}. Although it is expected that equilibration (in a suitable sense) takes place in a generic non-integrable quantum system \cite{vonNeumann}, a substantial body of recent research has shown that the interim evolution, which is becoming  accessible experimentally, is not yet fully understood \cite{Polkovnikov_etal11}. In particular, several qualitatively different scenarios of equilibration that depend on the initial state, including relaxation to a state that is different from the thermodynamic equilibrium, have been reported in a closed non-integrable spin chain \cite{banuls2011}. In this article a different spin model is studied, where the equilibrium state may not be independent of the initial conditions.

A model, where $N$ spin-$1/2$ particles on a lattice interact by a mean-field Curie-Weiss potential, has been discussed in \cite{paskauskas2012a}. From a previous study of a related model \cite{Kastner10}, it had been known that, with a restricted choice of coupling constants and initial conditions, certain observables approach their microcanonical expectation values with a rate $t_0$ that is system size-dependent and, in fact, diverges as $t_0\sim N^{1/2}$. With the aim of extending the range of parameters, and better understanding the course of equilibration of homogeneous states with arbitrary initial conditions, a study of the general Curie-Weiss quantum Heisenberg model was undertaken in \cite{paskauskas2012a}. For that purpose, reduced density operators $F_{1\dots n}$ for all $n=1\dots N$ spins were expanded in terms of a set of tensorial coefficients $\{f_n\}_{n=0}^N$, and the evolution equations for all $f_n$ were derived, starting from the Bogoliubov-Born-Green-Kirkwood-Yvon ({BBGKY}) hierarchy. Remarkably, these evolution equations, expressed in scaled units of time $\tau=t/t_0$ have a non-trivial $N=\infty$ limit. This thermodynamic \emph{coefficient {BBGKY} hierarchy} implies that in thermodynamic limit, $t_0$ is the slowest time scale of equilibration for arbitrary initial conditions and parameters of the model.

However, scaling is not the entire story, and additional information about equilibration is contained in the solutions of the equations which, unlike scaling, depend on a $3\times3$ matrix of spin-spin coupling constants $J$ and on the magnetic field vector $h$ in a non-trivial way. For example, it was shown that if the Curie-Weiss potential is of the form $V=-NJ_\perp(S^xS^x+S^yS^y)-NJ_zS^zS^z$ [ which corresponds to a diagonal matrix $J=\mathrm{diag}(J_\perp,J_\perp, J_z)$], and where $S^a$ is the $a\in\{x,y,z\}$ component of the average spin per particle, the coefficient BBGKY equations simplify considerably. Evolution of a certain subset of tensorial coefficients $f_n$ could be determined analytically to all orders $n=1,2,\dots$, showing a characteristic superexponential [$\sim\exp{(-\tau^2\Delta/2)}$, where $\Delta=(J_z-J_\perp)^2$]  approach to thermodynamic expectation values, which in this case are equal to zero. And yet, another subset of coefficients could be identified, which displays no relaxation at all. These analytic results are not expected to be qualitatively valid for more general couplings and, in fact, no such analytic results are known about as simple generalization, as $J=\mathrm{diag}(J_x,J_y,J_z)$, which differs from the one considered in \cite{paskauskas2012a} by allowing all three diagonal elements of $J$ to be different. 

In this article, the thermodynamic coefficient {BBGKY} hierarchy equations are further investigated. The main goal is to classify, for a general $3\times3$ matrix $J$ of coupling constants and a general magnetic field vector $h=(h_x,h_y,h_z)$, all tensorial coefficients with different behavior. Although the number $D(N)$ of such coefficients depends on the system size $N$ and not on parameters $J$ or $h$, it will be shown that the matrix of effective thermodynamic limit coupling constants $\oJ(J,h)$, which is a function of $J$ and $h$, determines the partition of all the coefficients into several groups, decoupled by the {BBGKY} hierarchy. In this sense, $\oJ$ induces a topology in the vector space space of expansion coefficients.

This study provides an insight that the $N$-body quantum dynamics may impose symmetries on the Hilbert space. It is believed in this work that similar symmetries are present in a class of more realistic lattice spin models, where interactions among particles depend on the distance between them. Existence of invariant subspaces implies that equilibration and the equilibrium state may not be independent of the initial state, but not only that. It allows to divide a large computational problem into several smaller ones and moreover, should be taken in consideration when approximate dynamical models, such as kinetic theories, are constructed. The latter include a closure scheme as one step of approximation, whereby an ad-hoc relationship among correlators, involving different numbers of particles, is postulated. The existence of invariant subspaces with rigorously independent evolution implies causal relations among spin-spin correlators which should be respected by a good closure scheme.

The article is structured as follows. \Sref{s:summary} contains the preliminaries: The Curie-Weiss anisotropic quantum Heisenberg model is presented, a special expansion of reduced density operators is recalled, and the derivation of the equations for the expansion coefficients is outlined \cite{paskauskas2012a}.
In \sref{s:equations}, a useful representation of the coefficient {BBGKY} hierarchy equations is derived. Moreover, since the subsequent analysis of the equations can be succinctly expressed using concepts of graph theory, part of \sref{s:equations} is dedicated to the introduction to the relevant terminology.
In \sref{s:transformation}, equations are reformulated using the generalized Liouville operator. It is shown that this generalized Liouville operator is anti-Hermitian, provided that the scalar product in the vector space is appropriately defined.
In \sref{s:symmetries} the partition of the coefficient vector space is investigated. It is shown that all the coefficients can be grouped in either four, two or $O(N)$ disjoint sets, depending on whether the matrix of effective coupling constants is diagonal isotropic or anisotropic, or has a certain number of non-zero off-diagonal elements.
Finally, in \sref{s:conclusions} conclusions are drawn. 

\section{Model and a summary of previous results}\label{s:summary}
This section briefly outlines some of the previous results which will be necessary for the subsequent discussion. The Reader is kindly referred to \cite{paskauskas2012a} for details of derivation.

\subsection{Curie-Weiss anisotropic quantum Heisenberg model}
Consider a Hilbert space $\mathcal{H}_N=\bigotimes_{i=1}^N \CC_i^2$ of $N$ identical spin-$1/2$ particles, where the Hilbert space of each spin-$1/2$ particle is $\CC_i^2$. The anisotropic Curie-Weiss Hamiltonian $H_{1\dots N}$, acting on $\mathcal{H}_N$, is defined by
\begin{equation}
  H_{1\dots N} = \sum_{i=1}^N H_i + \sum_{\scriptstyle i,j=1\atop \scriptstyle i<j}^N V_{ij}\,.\label{e:ham}
\end{equation}
The local potential $H_i$ and the interaction potential $V_{ij}$ are defined as
\begin{equation}
  H_i   = -\sum_{a\in\setI} h_a \sigma_i^a\,,\qquad V_{ij} = -\lambda\sum_{a,b\in\setI} J_{ab}\sigma_i^a \sigma_j^b\,. \label{e:V}
\end{equation}
where $\setI=\{x,y,z\}$, and where
\begin{equation}\label{e:lambda}
  \lambda=1/N.
\end{equation}
The scaling factor $\lambda$ is necessary to render the energy per spin finite in the thermodynamic limit. The interaction potential $V_{ij}$ is determined by a symmetric $3\times3$ matrix of spin-spin coupling constants $J=(J_{ab})$. The local potential $H_i$ is determined by the three components of the magnetic field vector $h=(h_a)$.

\subsection{Operator {BBGKY} hierarchy}
The density operator of an $N$-spin system is a self-adjoint, positive, trace-class operator $\rho_N\in \mathcal{H}_N$. The normalization convention $\Tr_{1\dots N}\rho_N=1$ will be used, where $\Tr_{1\dots N}$ denotes the trace over $N$ degrees of freedom. The density operator satisfies the Von Neumann equation, associated to the Hamiltonian \eref{e:ham}
\begin{equation}\label{e:vn}
  \rmi\hbar\partial_t \rho_N=\pb{H_{1\ldots N},\rho_N}.
\end{equation}
Reduced $n$-particle density operators are defined by 
\begin{equation}\label{e:F}
  F_{1\dots n}=\Tr_{n+1\dots N}\rho_N \qquad  1\le n\le N.
\end{equation}
where $\Tr_{n+1\dots N}$ denotes a partial trace over $N-n$ spin degrees of freedom. In this article, only homogeneous states will be considered. The $\rho_N$ for such states is invariant with respect to an arbitrary permutation of particle indices. As a result, the $n$-particle reduced density operators do not depend on \emph{which} $N-n$ particle complex is traced over in \eref{e:F}. 

Reduced density operators satisfy \emph{trace properties} which are relations of the form
\begin{equation}\label{e:traceprop}
  \Tr_{n+1} F_{1\dots n+1} = F_{1\dots n}.
\end{equation}

The set of evolution equations for all the operators $\{F_{1\dots n}\}_{n=1}^N$ is called the Bogoliubov-Born-Green-Kirkwood-Yvon (BBGKY) hierarchy. It is derived from \eref{e:vn} using the standard formalism \cite{balescu1975}. For the Hamiltonian \eref{e:ham} it reads
\begin{equation}\label{e:spinkin}
  \rmi\hbar\partial_t F_{1\ldots n}=\sum_{\scriptstyle i,j=1\atop \scriptstyle i<j}^n\pb{V_{ij},F_{1\ldots n}}+ (N-n) \Tr_{n+1}\sum_{i=1}^{n}\pb{V_{i,n+1},F_{1\ldots n+1}}.
\end{equation}
The initial value problem for the reduced density operators $F_{1\dots n}$ consists of the set of equations \eref{e:spinkin} for each $1\le n\le N$, supplemented by the trace properties \eref{e:traceprop}.

It was shown \cite{paskauskas2012a} that there are two time scales involved in the dynamics: The slow time scale of long-range spin-spin interactions drives the equilibration and is the main focus of this article. This slow dynamics is coupled with the fast precession around the magnetic field vector $h=(h_x,h_y,h_z)$, appearing in the terms $H_i$. The latter terms can be eliminated by passing to an interaction picture, defined by
\begin{equation}\label{e:iPic}
  F_{1\ldots n}\rightarrow
  \tilde{F}_{1\dots n}= \exp\biggl[\frac{\rmi t}{\hbar}\sum_{i=1}^ n H_i\biggr] F_{1\ldots n} \exp\biggl[-\frac{\rmi t}{\hbar}\sum_{i=1}^ n H_i\biggr].
\end{equation}
The {BBGKY} hierarchy in the interaction picture is then given by
\begin{equation}\label{e:spinkin2}
  \fl \rmi\hbar\partial_t \tilde{F}_{1\ldots n} =
  \sum_{\scriptstyle i,j=1\atop \scriptstyle i<j}^n\pb{\tilde{V}_{ij}(t),\tilde{F}_{1\ldots n}}+ (N-n) \Tr_{n+1}\sum_{i=1}^{n}\pb{\tilde{V}_{i, n+1}(t),\tilde{F}_{1\ldots n+1}},
\end{equation}
where the two-body potential is now explicitly time-dependent,
\begin{equation}\label{e:V2}
  \tilde{V}_{ij}(t) =
  \exp\left[\frac{\rmi t}{\hbar} \left(H_i+H_j\right)\right]V_{ij}\exp\left[-\frac{\rmi t}{\hbar} \left(H_i+H_j\right)\right].
\end{equation}
The interaction picture potential $\tilde{V}_{ij}(t)$ is of the same form as \eref{e:V}, 
\begin{equation}
  \tilde{V}_{ij}(t)=-\lambda \sum_{a,b\in\setI} \tilde{J}_{ab}(t)\sigma_i^a\sigma_i^b\,,
\end{equation}
where the matrix $J$ is replaced by a time-dependent $\tilde{J}(t)$, defined by 
\begin{equation}\label{e:JPic}
  \tilde{J}(t)= B(th/\hbar)^\trsp J B(th/\hbar)\,.
\end{equation}
Define $b=th/\hbar$, $\hat{h}=h/\abs{h}$, and $c^0=\cos{\abs{2b}}$. Moreover, define $c^u=\hat{h}^u\sin{\abs{2b}}$, and $c^{uv}=2\hat{h}^u\hat{h}^v\sin^2{\abs{b}}$, where $u,v\in\{x,y,z\}$. Using these definitions, $B(b)$ is written as 
\begin{equation}\label{e:B}
  B(b)=\left(\!\!\!\begin{array}{ccc}
  c^{xx}+c^0 & c^{xy}+c^z & c^{xz}-c^y \\
  c^{xy}-c^z & c^{yy}+c^0 & c^{yz}+c^x \\
  c^{xz}+c^y & c^{yz}-c^x & c^{zz}+c^0 \end{array}\!\!\!\right)\,.
\end{equation}

\subsection{Coefficient expansion and scaled variables}
Convinced that a ``good'' coefficient expansion should automatically satisfy the trace properties for reduced density operators \eref{e:traceprop}, and noting that the following definitions
\numparts\begin{eqnarray}
  F_1 &=& \frac12\big( \one_1 + \sum_{a\in \setI} f_1^a\sigma_1^a\big) \,,\\
  F_{12} &=& \frac14\big( \one_{12} + \sum_{a\in\setI} f_1^a( \sigma_1^a+\sigma_2^a) + \sum_{a,b\in\setI}f_2\sigma_1^a\sigma_2^b\big)\,.
\end{eqnarray}\endnumparts
satisfy $\Tr_2 F_{12}=F_1$ and $\Tr_1F_1=1$, in \cite{paskauskas2012a} we attempted a generalization by
\begin{equation}\label{e:Fexp}
  F_{1\ldots n} = 2^{-n}\sum_{s=0}^{n}
  \sum_{a\in \mathcal{I}^s} f_s^{a}
  \sum_{p\in \setP_s(n)}\prs_{p}^{a}\,,
\end{equation}
where $\sigma_i^a$ is the $a$-th component of the Pauli operator on the $i$-th lattice site. Here $\{f_s\}_{s=0}^n$ is the set of coefficients of expansion of $F_{1\dots n}$, where $f_s\in\RR^{3\times\dots\times3}$ is a rank-$s$ symmetric tensor with components $f_s^{a}$, and where $a=(a_1a_2\dots a_s)\in \setI^s$. By convention, $f_0=1$. Expansion \eref{e:Fexp} holds also for $\tilde{F}_{1\dots n}$ in terms of coefficients 
$\{ \tilde{f}_s \}_{s=0}^n$. 

The set $\setP_s(n)$ consists of all $s$-element permutations $\left(p_1,\ldots,p_s\right)$ of particle labels $p_i\in\{1,\dots,n\}$ such that $p_i<p_j$ for all $1\le i<j\le n$ (i.e.\ all sequences are strictly increasing.) Considering for example $s=2$, we have $\setP_2(n)=\{(1,2),(1,3),\ldots,(1,n),(2,3),\ldots,(n-1,n)\}$. Lastly, a product of $s$ Pauli operators on different lattice sites is denoted by 
\begin{equation}\label{e:pp}
  \prs_p^a=\prod_{i=1}^s \sigma_{p_i}^{a_i}\,.
\end{equation}

The expansion coefficients are related to the expectation values of many-particle spin observables by the identity
\begin{equation}
  f_s^a = \avg{\prs^a},
\end{equation}
where $\avg{\prs^a}$ is the expectation value of a product of $s$ spin operators, defined by
\begin{equation}
  \avg{\prs^a}=\Tr_{1\dots s} \big( F_{1\dots s}  \prod_{i=1}^s \sigma_i^{a_i} \big)\,.
\end{equation}

Mapping of the operator {BBGKY} hierarchy \eref{e:spinkin} and \eref{e:spinkin2} onto equations for the coefficients $\{f_s\}_{s=1}^N$ and $\{\tilde{f}_s\}_{s=1}^N$ is discussed in \cite{paskauskas2012a}. Here, only the final result for the interaction picture \eref{e:spinkin2} is recalled:
\begin{equation}\label{e:bbgky0}
 \frac{\hbar}{2}\partial_t \tilde{f}_n = \lambda \tilde{v}^-_{ n}[\tilde{f}_{n-1}] + (1-n\lambda)\tilde{v}^+_{n}[\tilde{f}_{n+1}]\,,
\end{equation}
where the components of $\tilde{v}_n^\pm$ are given by 
\numparts\begin{eqnarray}
  (\tilde{v}^-_{n})^{(a_1\dots a_n)}[\tilde{f}_{n-1}]= -\sum_{b,c\in\mathcal{I}} \sum_{\scriptstyle i,j=1\atop \scriptstyle i\neq j}^ n \eps_{a_i bc} \tilde{J}_{ba_j}(t) \tilde{f}_{ n-1}^{a-a_i+c-a_j},\label{e:v-}\\
  (\tilde{v}^+_{n})^{(a_1\dots a_n)}[\tilde{f}_{n+1}]=-\sum_{b,c,d\in\mathcal{I}} \tilde{J}_{bd}(t) \sum_{i=1}^n \eps_{a_ibc}\tilde{f}_{ n+1}^{a-a_i+c+d}\,.\label{e:v+}
\end{eqnarray}
\endnumparts
Here, $\eps_{abc}$ is the Levi-Civita symbol defined according to the convention $\eps_{xyz}=1$, and
\begin{equation}
a-a_i+c-a_j=(a_1,\ldots,a_{i-1},c,a_{i+1},\ldots,a_{j-1},a_{j+1},\ldots,a_n)
\end{equation}
in \eref{e:v-} is derived from $a=(a_1\dots a_n)$ by replacing the $i$th element by $c$ and then deleting the $j$th entry of $a$, and
\begin{equation}
a-a_i+c+d=(a_1,\ldots,a_{i-1},c,a_{i+1},\ldots,a_n,d)
\end{equation}
\endnumparts
in \eref{e:v+} is obtained from $a$ by replacing the $i$th element by $c$ and then appending $d$.

Tensors $\tilde{v}_n^\pm[\tilde{f}_{n\pm1}]$ are linear in the tensors $\tilde{f}_{n\pm1}$ and in the time-dependent matrix $\tilde{J}(t)$. Moreover, $\tilde{v}_n^\pm$ do not depend on $\lambda$ explicitly. A change of  $t$ and $\tilde{f}_n$ variables to
\begin{equation}
  \tau=2t\lambda^{1/2}/\hbar, \qquad \tilde{f}'_n=\tilde{f}_n/\lambda^{n/2}\,\label{e:scaling}
\end{equation}
balances all the powers of $\lambda$ in the coefficients of $\tilde{v}_n^\pm$ in \eref{e:bbgky0}. These coefficients are then cancelled from the equation with the goal to arrive at a scale-free {BBGKY} hierarchy. The remaining dependence on the number of particles enters the scaled equations through the coefficient $1-n\lambda$ at order $n$, and in the matrix $\tilde{J}(t)$. By \eref{e:JPic}--\eref{e:B}, $\tilde{J}(t)$ is a sum of a constant term $\oJ$ and time-dependent terms which, by the form of $B(th/\hbar)$, are trigonometric functions such as $\cos{(2t\abs{h}/\hbar)}$. In $\tau$ units of time, such terms are a fast oscillation on top of the mean $\oJ$, and it was argued \cite{paskauskas2012a} that these fast oscillations can be taken into account by averaging (which becomes exact in thermodynamic limit.) With the additional assumption that $\tilde{f'}$ varies slowly in the period of oscillation $T_\lambda$, $\tilde{f'}$ is replaced by its average value $\overline{f}'(\tau)=(T_\lambda)^{-1}\int_\tau^{\tau+T}\tilde{f'}(\tau')d\tau'$. Application of this procedure on \eref{e:bbgky0} results in the scaled and averaged {BBGKY} hierarchy
\begin{equation}\label{e:bbgky1}
  \partial_\tau \overline{f}'_n\approx\overline{v}^-_{n}[\overline{f}'_{n-1}] + (1-\lambda n)\overline{v}^+_{n}[\overline{f}'_{n+1}]\,,
\end{equation}
where $\overline{v}_n^\pm$ are obtained from $\tilde{v}_n^\pm$ by replacing $\tilde{J}(t)$ by $\oJ$. The latter is defined by
\begin{equation}\label{e:Javg}
  \oJ =T^{-1} \int_0^{T} \tilde{J}(t) dt\qquad T=\pi \hbar/\abs{h}\,.
\end{equation}
 The approximation in \eref{e:bbgky1} is a result of a substitution $\tilde{f}\to \overline{f}'$. It turns into an equality in two cases. Firstly, if there is no magnetic field in \eref{e:ham} i.e. if $h=0$, then there is no need to perform averaging. In this case the non-averaged coefficient {BBGKY} hierarchy has the same form as \eref{e:bbgky1} where $\oJ$ is equal to the original coupling matrix $J$. Secondly, if separation of scales is a valid assumption\footnote{It is justified rigorously in a special case, treated in \cite{paskauskas2012a}}, the averaging is exact in the limit $\lambda=0$. 

This limit describes thermodynamic evolution of the model by means of an infinite dimensional set of differential equations. The structure of these equations is the main focus of this article. Dropping the bars and the primes from $\overline{f}'$ and $\overline{v}_n^\pm$ for simplicity, the thermodynamic limit {BBGKY} hierarchy is defined as
\begin{equation}\label{e:bbgky2}
  \partial_\tau f_n=v^-_{n}[f_{n-1}] + v^+_{n}[f_{n+1}]\,.
\end{equation}
The distinction between the original matrix $J$ and the averaged one, $\oJ$, is maintained.

\section{Desymmetrized coefficient {BBGKY} hierarchy}\label{s:equations}
This section introduces the desymmetrized coefficient {BBGKY} hierarchy and presents the framework for the subsequent study of its properties in sections \ref{s:transformation}--\ref{s:symmetries}.
\subsection{Coefficient equations}\label{s:desym}
\Eref{e:Fexp} maps $F_{1\dots N}$ onto a set of tensor coefficients $\{f_n\}_{n=0}^N$, where $f_n$ is a rank-$n$ three dimensional symmetric tensor with components $f_n^a$, and where $a=(a_1\dots a_n)$ is a coordinate multiindex, and $a_i\in\{x,y,z\}$. A symmetric rank-$n$ tensor in three dimensions is specified completely by $d(n)$ numbers, where
\begin{equation}\label{e:dn}
  d(n) = \sum_{j=1}^{n+1}j = (n+1) (n+2) / 2.
\end{equation}
The number of degrees of freedom $D(N)$ is defined as the number of independent coefficients in the expansion of $F_{1\dots N}$. It is equal to a sum of the numbers of independent components from all the tensors $f_n$ excluding the trivial $f_0=1$,
\begin{equation}\label{e:Dn}
  D(N)=\sum_{n=1}^N d(n) = (N+1)(N+2)(N+3)/6-1.
\end{equation}
A sequence, formed from all the independent components of $f_n$, can be viewed as a vector $\mathrm{indep}(f_n)$ on a $d(n)$-dimensional vector space $\hilb_n$,
\begin{equation}
  \mathrm{indep}(f_n) \in \hilb_n\,.
\end{equation}
where $\hilb_n\subseteq\RR^{d(n)}$ is called the $n$-spin desymmetrized coefficient vector space, or simply the $n$-spin vector space. A sequence of all independent tensor components
\begin{equation}\label{e:f}
  \mathrm{indep}(f) = ( \mathrm{indep}(f_1), \dots, \mathrm{indep}(f_N) ) \in \Hilb_N
\end{equation}
can be viewed as a vector on a direct sum of $n$-spin desymmetrized vector spaces,
\begin{equation}\label{e:hilbD}
  \Hilb_N=\bigoplus_{n=1}^N\hilb_n\,.
\end{equation}

By the symmetry of $f_n$, any $f_n^{a'}$ is equal to $f_n^{a(m)}$, where
\begin{equation}\label{e:a(m)}
  a(m)=(\underbrace{x,\ldots,x}_{\hspace{-3mm}\displaystyle m^x\mbox{ elements}\hspace{-3mm}},y,\ldots,y,z,\ldots,z)
\end{equation}
is a permutation of $a'$ with the components $x$, $y$, $z$ arranged in contiguous blocks of $m^x$, $m^y$, and $m^z$ labels. Hence, all independent components of $f_n$ can be enumerated uniquely by a triple of nonnegative integers $m\in \nodes_n$, where the set $\nodes_n$ is defined by 
\begin{equation}\label{e:setMn}
  \nodes_n = \{ (m^x,m^y,m^z): m^a\ge0, m^x+m^y+m^z=n\}\,.
\end{equation}
In order to denote the allowed set of indices for an arbitrary $n$, define
\begin{equation}
  \nodes = \{ (m^x,m^y,m^z): m^a\ge0, m^x+m^y+m^z\le N\}\,.
\end{equation}
Define the norm of $m\in\nodes_n$ by
\begin{equation}\label{e:mnorm}
  \abs{m} = m^x+m^y+m^z\,.
\end{equation}
As a result, all $\mathrm{indep}(f)\in\Hilb_N$ are uniquely labeled by elements of
\begin{equation}\label{e:setM}
  \nodes^{(N)}=\bigcup_{n=1}^N \nodes_n\,.
\end{equation}
In the subsequent, instead of a cumbersome $\mathrm{indep}(f_n)$, the sequence of all the independent components $f_n^{a(m)}$ will be denoted by $f_n(m)\in\hilb_n$ where $m\in\nodes_n$.

The coefficient hierarchy \eref{e:bbgky2} is translated to a desymmetrized coefficient {BBGKY} hierarchy on $\Hilb_N$. This is formally achieved by replacing $f_n^{a'}$ by $f_n(m)\equiv f_n^{a(m)}$, and by replacing $v_n^\pm[f_{n\pm1}]$ on the right-hand side of \eref{e:bbgky2} by $v_n^\pm(m, f_{n\pm1})$
\begin{equation}\label{e:bbgky3}
\partial_\tau f_n(m,\tau)= v_{n}^-(m,f_{n-1}) + (1-n\lambda) v_n^+(m,f_{n+1}).
\end{equation}
 The derivation of $v_n^\pm(m,f_{n\pm1})$ is given in \ref{s:derivation}. It is convenient to separate the diagonal and off-diagonal components of $\oJ$. Define vectors $\oK=(\oK_x,\oK_y,\oK_z)$ and $\oW=(\oW_x,\oW_y,\oW_z)$ as
\numparts\begin{eqnarray}
(\oK_x,\oK_y,\oK_z) &=& (\oJ_{yy}-\oJ_{zz}, \oJ_{zz}-\oJ_{xx}, \oJ_{xx}-\oJ_{yy})\,, \label{e:K} \\
(\oW_x,\oW_y,\oW_z) &=& (\oJ_{yz}, \oJ_{zx}, \oJ_{xy}) \label{e:W} \,.
\end{eqnarray}\endnumparts
Replacing $f_{n\pm1}$ by a generic $f$ in $v_n^{\pm}(m,f_{n\pm1})$ and using the notation \eref{e:K}--\eref{e:W}, terms $v_n^\pm(m,f)$ for general $\oK$ and $\oW$, and hence a general symmetric $\oJ$, are given by
\numparts
\begin{eqnarray}
  \fl v_{n}^+(m,f) &=& - \oK_xm^x f(m^x-1,m^y+1,m^z+1) -\oK_ym^y f(m^x+1,m^y-1,m^z+1)\nonumber\\
  \fl && - \oK_zm^z f(m^x+1,m^y+1,m^z-1)\nonumber\\
  \fl  &-&\oW_x(m^y-m^z) f(m^x+1,m^y,m^z) - \oW_y(m^z-m^x)f(m^x,m^y+1,m^z) \nonumber\\
  \fl  && -\oW_z(m^x-m^y)f(m^x,m^y,m^z+1) \nonumber \\
  \fl  &-& m^x\oW_x\big[ f(m^x-1,m^y,m^z+2)-f(m^x-1,m^y+2,m^z) \big] \nonumber \\
  \fl  &-& m^y\oW_y\big[ f(m^x+2,m^y-1,m^z)-f(m^x,m^y-1,m^z+2) \big] \nonumber \\
  \fl  &-& m^z\oW_z\big[ f(m^x,m^y+2,m^z-1)-f(m^x+2,m^y,m^z-1) \big]\,. \label{e:vplus}\\
  \fl \tilde{v}_{n}^-(m;f) &=& \oK_xm^ym^z f(m^x+1,m^y-1,m^z-1) + \oK_ym^zm^x f(m^x-1,m^y+1,m^z-1) \nonumber \\
  \fl  && +  \oK_zm^xm^y f(m^x-1,m^y-1,m^z+1)                                     \nonumber \\
  \fl  &+& \oW_xm^x(m^y-m^z) f(m^x-1,m^y,m^z) +  \oW_ym^y(m^z-m^x) f(m^x,m^y-1,m^z)  \nonumber \\
  \fl  && + \oW_zm^z(m^x-m^y) f(m^x,m^y,m^z-1)\nonumber \\
  \fl  &+& m^x(m^x-1) \big[ \oW_y f(m^x-2,m^y+1,m^z)-\oW_zf(m^x-2,m^y,m^z+1)\big] \nonumber \\
  \fl  &+& m^y(m^y-1) \big[ \oW_z f(m^x,m^y-2,m^z+1)-\oW_xf(m^x+1,m^y-2,m^z)\big] \nonumber \\
  \fl  &+& m^z(m^z-1) \big[ \oW_x f(m^x+1,m^y,m^z-2)-\oW_yf(m^x,m^y+1,m^z-2)\big]\,.\label{e:vminus}
\end{eqnarray}\endnumparts
In \eref{e:vplus}--\eref{e:vminus}, $\mathrm{indep}(f)$ has been replaced by $f$, and in the subsequent the notation $f\equiv f(m)$ will imply an element of $\Hilb_N$.

\subsection{Permutation invariant form of equations}\label{s:permutation}
The {BBGKY} hierarchy \eref{e:bbgky1} can be written in an explicitly permutation invariant form, which is more concise and facilitates further analysis. 

Define a cyclic permutation operator, acting on $\setI= \{x,y,z\}$, by 
\begin{equation}
  \pi (x,y,z) = (y,z,x)
\end{equation}
and three reflection operators $r_{x}$, $r_{y}$, $r_{z}$ which act by permuting pairs of elements. For example
\begin{equation}
  r_{x}(x,y,z) = (x,z,y)
\end{equation}
The group of permutations in three dimensions consists of six elements, ${\mathfrak g}=\{i, \pi, \pi^2, r_{1}, r_{2}, r_{3}\}$, where $i$ is the identity operator $i(x,y,z)=(x,y,z)$.

Define twelve vectors $t_i^a$, labelled by $i=1,2,3,4$ and $a\in\{x,y,z\}$, by
\begin{equation}\label{e:tai}
  \displaystyle
  \begin{array}{lll}
    t_1^x=(-1,1,1),\quad & t_1^y=(1,-1,1), \quad & t_1^z=(1,1,-1). \vspace{0.1cm}\\
    t_2^x=(1,0,0), &  t_2^y=(0,1,0), & t_2^z=(0,0,1). \vspace{0.1cm}\\
    t_3^x=(-1,0,2), & t_3^y=(2,-1,0), & t_3^z=(0,2,-1).\vspace{0.1cm}\\
    t_4^x=(-1,2,0), & t_4^y=(2,0,-1), & t_4^z=(0,-1,2).
  \end{array}
\end{equation}
Note that $t_i^y=\pi t_i^x$, $t_i^z=\pi t_i^y=\pi^2t_i^x$, and $t_4^a=r_at_3^a$. Using these definitions, the coefficient {BBGKY} hierarchy equations \eref{e:bbgky3} and \eref{e:vplus}--\eref{e:vminus} can be written as
\begin{eqnarray}\label{e:graph}
\fl \partial_\tau f_n(m) &=& \sum_{a\in \setI}\oK_a\Bigl\{ m^{\pi a}m^{\pi^{2}a} f_{n-1}(m-t_1^a) - m^a f_{n+1}(m+t_1^a)\Bigr\}\nonumber\\
\fl  &+& \sum_{a\in \setI}\oW_a(m^{\pi a}-m^{\pi^2a}) \Bigl\{ m^af_{n-1}(m-t_2^a) - f_{n+1}(m+t_2^a)\Bigr\}\nonumber\\
\fl  &+& \sum_{a\in \setI}\oW_a\Bigl\{m^{\pi^2a}(m^{\pi^2a}-1) f_{n-1}(m-t_3^a) - m^{\pi a}(m^{\pi a}-1)f_{n-1}(m-t_4^a) \nonumber \\
\fl  && - m^a\bigl[ f_{n+1}(m+t_3^a)-f_{n+1}(m+t_4^a)\bigr]\Bigr\}
\end{eqnarray}

\subsection{Representation by a graph}\label{s:graphs}
To study the properties of \eref{e:graph} it is convenient to use the language of the graph theory. A graph $\graph=(\nodes,\edges)$ is an abstract representation of two sets: The set $\nodes$ of objects called nodes $m\in\nodes$ will be identified with the set of indices \eref{e:setM} of elements of $\Hilb_N$. Certain pairs of nodes $m$ and $m_1$ of the graph will be connected by links, called edges and denoted by $\vv{m,m_1}$. The set of all the edges is denoted by $\edges$. 

Couplings among tensor coefficients in \eref{e:graph} [or in \eref{e:bbgky2} or \eref{e:bbgky3} which are equivalent to \eref{e:graph}] will be allowed to determine $\edges$ by the following definition. 
\begin{definition}\label{def:edge} An edge $\vv{m,m_1}$ is an ordered pair of elements $m, m_1\in\nodes$. $\vv{m,m_1}\in\edges$ if and only if two conditions are satisfied: $\partial_\tau f_n(m)$ is coupled with $f_{n_1}(m_1)$ by \eref{e:graph} and the coupling coefficient is not equal to zero. 
\end{definition}
Two more definitions will help establish the terminology. 
\begin{definition}\label{def:adjoint}
  The node $m_1$ is said to be adjacent to $m$ if $\vv{m,m_1}\in\edges$. A set $\adj (m)\subset\nodes$ is a set of all of the nodes, adjacent to $m$.
\end{definition}
\begin{definition}\label{def:value}
  A value function $\psi: \edges\to \RR$ is equal to the coupling coefficient between $\partial_\tau f_n(m)$ and $f_{n_1}(m_1)$ and is denoted by  $\psi[\vv{m,m_1}]$. 
\end{definition}
The three-term recurrence structure of the {BBGKY} hierarchy implies that $\adj (\nodes_n)\subseteq\{ \nodes_{n-1},\nodes_{n+1}\}$. The adjacency relations will be labelled by $\pm$ to distinguish this structure. 
\begin{definition}\label{def:adj}
  A set $\adj^\pm(m)\subset\nodes$ is a set of all the nodes $m_1$ such that $m_1$ is adjacent to $m$ and 
$\abs{m_1} = \abs{m}\pm 1$. 
\end{definition}
By this definition,
\begin{equation}\label{e:adj}
  \adj^\pm(\nodes_n) \subseteq \nodes_{n\pm1}. 
\end{equation}

Having established the terminology of graph theory, let us now determine the implications of the structure of the {BBGKY} hierarchy \eref{e:graph} on the structure of the graph $\graph$. For this purpose the following two lemmas are provided. 

Assume that $m=(m^x, m^y, m^z)\in\nodes$, $m_1=(m_1^x,m_1^y, m_1^z)\in\nodes$, $m_1\in\adj^+(m)$. 
\begin{lemma}\label{l:upedge}
  Nodes $m$ and $m_1$ are connected by a edge $\vv{m,m_1}\in\edges$ iff for some $a\in\setI$ both conditions in one of \eref{e:t1}--\eref{e:t4} are satisfied
   \numparts\begin{eqnarray}
   &m_1=m+t_1^a  &\psi[\vv{m,m_1}]=-\oK_am^a\neq0.\label{e:t1}\\
   &m_1=m+t_2^a  &\psi[\vv{m,m_1}]=-\oW_a(m^{\pi a}-m^{\pi^2a})\neq0.\label{e:t2}\\
   &m_1=m+t_3^a  &\psi[\vv{m,m_1}]=-\oW_am^a\neq0.\label{e:t3}\\
   &m_1=m+t_4^a\qquad  &\psi[\vv{m,m_1}]=\oW_am^a\neq0.\label{e:t4}
   \end{eqnarray}\endnumparts
\end{lemma}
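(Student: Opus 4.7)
The strategy is to read off the relevant couplings directly from the right-hand side of \eref{e:graph} and then verify that the twelve displacement vectors $t_i^a$ of \eref{e:tai} are pairwise distinct, so that the correspondence between an adjacent node $m_1$ and the index pair $(i,a)$ giving $m_1-m=t_i^a$ is unambiguous.

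Concretely, I would first isolate in \eref{e:graph} exactly those terms in which the coefficient $f_n(m)$ couples to some $f_{n+1}(m_1)$. Inspection of the four sums in \eref{e:graph} shows that these "up" couplings come in four families, indexed by $a\in\setI$: the terms $-\oK_a m^a\,f_{n+1}(m+t_1^a)$, $-\oW_a(m^{\pi a}-m^{\pi^2 a})\,f_{n+1}(m+t_2^a)$, $-\oW_a m^a\,f_{n+1}(m+t_3^a)$ and $+\oW_a m^a\,f_{n+1}(m+t_4^a)$. Comparing these prefactors with \eref{e:t1}--\eref{e:t4} immediately yields the ``if'' direction: whenever one of the conditions of the lemma holds for some $a$, the corresponding term is explicitly present in \eref{e:graph} with a nonzero coefficient, so by \defref{def:edge} we have $\vv{m,m_1}\in\edges$ with the stated value $\psi[\vv{m,m_1}]$.

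For the converse, suppose $\vv{m,m_1}\in\edges$ with $m_1\in\adj^+(m)$. By \defref{def:edge}, $f_n(m)$ is coupled to $f_{n+1}(m_1)$ with a nonzero coefficient, which from the enumeration above forces $m_1-m=t_i^a$ for some $(i,a)\in\{1,2,3,4\}\times\setI$. The remaining obligation is to check that this $(i,a)$ is unique, so that no hidden cancellation occurs between contributions pointing to the same $m_1$. A direct look at \eref{e:tai} confirms that the twelve vectors $t_i^a$ all lie on the plane $m^x+m^y+m^z=1$ (consistent with $|m_1|=|m|+1$) but have distinct triples of coordinates, so the map $(i,a)\mapsto t_i^a$ is injective. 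Therefore exactly one of \eref{e:t1}--\eref{e:t4} is triggered, and its prefactor is the required nonzero $\psi[\vv{m,m_1}]$.

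The routine part is the pairing of terms with the formulas \eref{e:t1}--\eref{e:t4}; the one point that requires care is the injectivity of $(i,a)\mapsto t_i^a$, since without it the coefficient of a given $f_{n+1}(m_1)$ would be a sum of several $\oK_a$ and $\oW_a$ contributions and could vanish even when some individual term does not. Once that is checked, the lemma is a direct transcription of \eref{e:graph} into the language of \defref{def:edge}--\defref{def:adj}.
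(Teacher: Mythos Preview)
Your proof is correct and follows essentially the same approach as the paper: read off the $f_{n+1}$-couplings from \eref{e:graph} and match them against the list \eref{e:t1}--\eref{e:t4}. Your explicit verification that the twelve vectors $t_i^a$ are pairwise distinct (ruling out cancellations between different $(i,a)$ contributing to the same $m_1$) is a point the paper leaves implicit, so your argument is in fact slightly more complete.
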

\begin{proof}
  By \defref{def:edge} and by assumption $m_1\in\adj^+(m)$, an edge exists iff the term $\partial_\tau f_n(m)$ in \eref{e:graph} has a non-zero coupling with $f_{n+1}(m_1)$.  Since $\forall a,i:~\abs{m\pm t_i^a}=\abs{m}\pm1$, the first necessary condition for $\vv{m,m_1}\in\edges$ is that $m_1$ is equal to one of the terms listed in the first column of \eref{e:t1}--\eref{e:t4}. The second necessary condition is non-vanishing of the corresponding coupling coefficient, which are listed in the second column of \eref{e:t1}--\eref{e:t4}.
\end{proof}
\begin{lemma}\label{l:downedge}
  $\vv{m,m_1}\in\edges$ implies $\vv{m_1,m}\in\edges$.
\end{lemma}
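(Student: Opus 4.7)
The plan is a direct case analysis built on the explicit form of \eref{e:graph}. Since the coefficient BBGKY hierarchy is three-term, any edge must land in $\adj^+(m)\cup\adj^-(m)$, so I would split into Case I ($m_1\in\adj^+(m)$) and Case II ($m_1\in\adj^-(m)$) and in each produce the matching reverse edge.

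In Case I, \lemref{l:upedge} gives $m_1=m+t_i^a$ for some $i\in\{1,2,3,4\}$ and $a\in\setI$, with the forward coefficient listed in \eref{e:t1}--\eref{e:t4} nonzero. The reverse edge $\vv{m_1,m}$ lies in $\adj^-(m_1)$, so it arises from the down-coupling term $f_n(m_1-t_i^a)$ in the equation \eref{e:graph} for $\partial_\tau f_{n+1}(m_1)$. Its coefficient is read directly from \eref{e:graph} by substituting $m\to m_1$: it equals $\oK_a m_1^{\pi a}m_1^{\pi^2 a}$ for type $t_1$, $\oW_a(m_1^{\pi a}-m_1^{\pi^2 a})m_1^a$ for $t_2$, $\oW_a m_1^{\pi^2 a}(m_1^{\pi^2 a}-1)$ for $t_3$, and $-\oW_a m_1^{\pi a}(m_1^{\pi a}-1)$ for $t_4$. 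Case II is handled symmetrically: the forward edge now comes from a down-coupling $f_{n-1}(m-t_i^a)$ with $m_1=m-t_i^a$, and the reverse edge is up-type with coefficient given by substituting $m\to m_1$ in the second column of \eref{e:t1}--\eref{e:t4}.

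The substantive step, repeated through all eight subcases (four types times two directions), is to verify that nonvanishing of the forward coefficient forces nonvanishing of the reverse one. The mechanism is uniform: the shift $t_i^a$ strictly increments exactly those entries of $m$ that appear as combinatorial factors in the reverse coefficient, so the reverse coefficient is a nonvanishing multiple of the same $\oK_a$ or $\oW_a$ that appears in the forward one. For example, in Case I with type $t_1^a$, $m_1^{\pi a}=m^{\pi a}+1\ge1$ and $m_1^{\pi^2 a}=m^{\pi^2 a}+1\ge1$, so the reverse coefficient $\oK_a(m^{\pi a}+1)(m^{\pi^2 a}+1)$ is nonzero whenever $\oK_a\neq0$, which is forced by the forward hypothesis $-\oK_a m^a\neq0$. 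Types $t_3$ and $t_4$ shift the relevant $\pi^2a$ or $\pi a$ entry by $+2$, so the quadratic factors are at least $2$; for $t_2$ the transverse entries are unchanged by the shift but the factor $m_1^a=m^a+1\ge1$ supplies what is missing. I expect no conceptual obstacle here: the only real work is the combinatorial bookkeeping, in particular keeping the cyclic permutation $\pi$ straight across the eight subcases, and noting that the pair $(i,a)$ realizing the forward shift is the unique one needed to write down the reverse coefficient, since the twelve vectors $t_i^a$ are pairwise distinct.
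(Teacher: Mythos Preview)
Your proposal is correct and follows essentially the same route as the paper: use \lemref{l:upedge} to identify the shift vector $t_i^a$, read off the reverse coupling coefficient directly from \eref{e:graph} with $m\to m_1$, and observe that the shift increments precisely the integer factors appearing in the reverse coefficient so that nonvanishing of the forward $\oK_a$ or $\oW_a$ forces nonvanishing of the reverse coupling. The paper's proof works under the standing assumption $m_1\in\adj^+(m)$ stated just before \lemref{l:upedge}, so it only treats your Case~I explicitly; your inclusion of Case~II and the remark on pairwise distinctness of the $t_i^a$ make the argument slightly more self-contained but do not change its substance.
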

\begin{proof}
  Assume that $\vv{m,m_1}\in\edges$. By \lemref{l:upedge}, $m_1=m+t_i^a$ for some $i=1,2,3,4$ and some fixed $a\in\setI$. By \defref{def:adj}, the adjacent set $\adj^-{(m_1)}$ is determined by the non-zero couplings between $\partial_\tau f_{n+1}(m_1)$ and $f_n(m_2)$ where $m$ is substituted by $m_1$ in \eref{e:graph}. By \eref{e:graph} and \lemref{l:upedge}, $m_2=m+t_i^a-t_j^b$, therefore $m_2=m$ if $i=j$ and $a=b$. Examining the coupling coefficients for each $i$ and $a$ in \eref{e:graph} one finds
\numparts\begin{eqnarray} 
&\psi[\vv{m_1,m}]=\oK_a(m^{\pi a}+1)(m^{\pi^2a}+1)\qquad &i=1. \label{e:psi1} \\
&\psi[\vv{m_1,m}]=\oW_a(m^a+1)(m^{\pi a}-m^{\pi^2 a})    &i=2. \label{e:psi2} \\
&\psi[\vv{m_1,m}]=\oW_a(m^{\pi^2a}+1)(m^{\pi^2a}+2)      &i=3. \label{e:psi3} \\
&\psi[\vv{m_1,m}]=-\oW_a(m^{\pi a}+1)(m^{\pi a}+2)      &i=4. \label{e:psi4}
\end{eqnarray}\endnumparts
Denote the absolute value of an integer coefficient of $\psi[\vv{m,m_1}]$, multiplying a coupling constant $\oK_a$ or $\oW_a$, by $\nu(m,m_1)$, and the one of $\psi[\vv{m_1,m}]$ by $\nu(m_1,m)$. By assumption that $\psi[\vv{m,m_1}]\neq0$, it follows that the relevant coupling constant $\oK_a\neq0$ or $\oW_a\neq0$, and $\nu(m,m_1)\neq0$. Comparing \eref{e:t1}--\eref{e:t4} with \eref{e:psi1}--\eref{e:psi4} and using $m^a+1>0$, it follows that under these conditions $\nu(m_1,m)\neq0$ as well.
\end{proof}

Comparing \eref{e:t1}--\eref{e:t4} with \eref{e:psi1}--\eref{e:psi4} we conclude that $\nu(m,m_1)\neq \nu(m_1,m)$ in general. In particular,
\begin{equation}\label{e:nu}
  -\frac{\psi[\vv{m,m_1}]}{\psi[\vv{m_1,m}]} = \frac{\nu(m,m_1)}{\nu(m_1,m)}
\end{equation}
is equal to a ratio of two positive integers.
For $m_1=m+t_i^a$, define $\gamma_i^a$ by
\begin{equation}\label{e:gai}
  \gamma_i^a =  -\psi[\vv{m,m+t_i^a}] \psi[\vv{m+t_i^a,m}].
\end{equation}
The following formulae for $\gamma_i^a$ are derived from lemmas \ref{l:upedge}--\ref{l:downedge}
\numparts\begin{eqnarray}
  \gamma_1^a = \oK_a^2\max{(m^x,m_1^x)}\max{(m^y,m_1^y)}\max{(m^z,m_1^z)}\,,\label{e:gamma1}\\
  \gamma_2^a = \oW_a^2(m^a+1)\left(m^{\pi a}-m^{\pi^2a}\right)^2\,,\label{e:gamma2}\\
  \gamma_{3}^a = \oW_a^2m^a(m^{\pi^2 a}+1)(m^{\pi^2a}+2)\,,\label{e:gamma3}\\
  \gamma_{4}^a = \oW_a^2m^a(m^{\pi a}+1)(m^{\pi a}+2)\,.\label{e:gamma4}
\end{eqnarray}\endnumparts

The graph $\graph$ is drawn as points and arrows on a plane. Since edges are directional, an edge $\vv{m,m_1}$, connecting the $m$-th and $m_1$-st nodes can be represented by an arrow, starting at the point $m$ and ending at $m_1$. However, by \lemref{l:downedge}, all arrows come in pairs. Therefore, instead  of connecting each adjacent pair of nodes by a pair of arrows, each pair of arrows can be substituted by a single line and an integer number on each side of the line to distinguish the direction of a link. The following convention will be adopted. For a pair of adjacent nodes $m$ and $m_1\in\adj^+(m)$, $\nu(m,m_1)$ is to the right of (or below), and $\nu(m_1,m)$ is to the left of (or above) the line, connecting $m$ and $m_1$.

To take into account the adjacency relation \eref{e:adj} and the composite structure \eref{e:setM}, all the nodes $m\in\nodes_n$ with a fixed $n$ will be drawn at the same height. For example, a segment of $\graph^1$ \footnote{Some nodes and vertices of $\graph^1$ are not drawn in order to keep the image clear. } is shown in \eref{e:Gx}, where each node is denoted by $m^x.m^ym^z$. 
\begin{equation}\label{e:Gx}
  \newcommand\av[2]{\ar@{-}@<0.15ex>[u]^{#1}_{#2}}
  \newcommand\ad[2]{\ar@{-}@<0.15ex>[ur]^{#1}_{#2}}
  \newcommand\add[2]{\ar@{-}@<0.15ex>[urr]^{#1}_{#2}}
  \newcommand\lk{\ar@{-}@<0.15ex>[ul];}
  \xymatrix@C=1pc@R=1pc{
    1.06 &1.24 &3.04&3.22&5.02 &7.00\\
    0.15\av{6}{1}\ad{2}{5}&2.13\av{8}{2}\ad{}{}\add{6}{3}&0.33\lk&4.11\av{4}{4}\ad{10}{1}&\\
    1.04\av{5}{1}\ad{2}{4}&1.22\av{6}{2}\ad{9}{1}&3.02\ad{4}{2}&5.00\av{1}{5}\\
    0.13\av{4}{1}\ad{2}{3}&2.11\av{4}{2}\ad{6}{1}&\\
    1.02\av{3}{1}\ad{2}{2}&3.00\av{1}{3}\\
    0.11\av{2}{1} &\\
    1.00\av{1}{1}
  }
\end{equation}

\section{Initial value problem for the coefficient {BBGKY} hierarchy}\label{s:transformation}

\subsection{Generalized Liouville operator}
Time evolution of coefficients $f$ is obtained by solving the initial value problem
\begin{equation}\label{e:iv}
  \partial_\tau f(\tau,f_{\mathrm{i}}) = \liouv f(\tau,f_{\mathrm{i}})\,,\qquad f(0,f_{\mathrm{i}})=f_{\mathrm{i}},
\end{equation}
where $\liouv f(\tau, f_{\mathrm{i}})$ denotes the right-hand side of the {BBGKY} hierarchy \eref{e:bbgky3}. The initial condition $f_{\mathrm{i}}$ is determined from the initial condition for the operator $F_{1\dots N}$. 

The linear operator $\liouv: \Hilb_N\mapsto\Hilb_N$ is called the generalized Liouville operator \cite{balescu1975}. It is represented by a real-valued matrix, which multiplies vectors $f=(f_1,\dots,f_N)^\trsp$. The latter consist of $N$ blocks by the composition formula \eref{e:f}. The $n$-th block of $f$ consists of all the coefficients in the $n$-spin vector space arranged in a sequence. This block structure of the vector space $\Hilb_N$ imparts a similar block structure on $\liouv$, comprising $N\times N$ rectangular matrices $L^{(ij)}\in\RR^{d(i)\times d(j)}$, where $1\le i,j\le N$. The three-term recurrence structure of \eref{e:bbgky3} implies that the only non-zero blocks $L^{(ij)}$ are such blocks that $\abs{i-j}=1$. Define 
\begin{equation}\label{e:Ln}
\fl  L_n^+ = -L^{(n,n+1)}\in\RR^{d(n)\times d(n+1)}\,,\qquad L_n^-= L^{(n+1,n)}\in\RR^{d(n+1)\times d(n)}.
\end{equation}
The generalized Liouville operator $\liouv$ has the form of a block tridiagonal matrix with rectangular off-diagonal blocks $L_n^+$ and $L_n^-$
\begin{equation}\label{e:matrix}
  \liouv = \left(\!\!\!\begin{array}{cccccc}
  0 & -L^+_1 & & & & \\
  L_1^- & 0 & -L_2^+ & & & \\
  & L_2^- & 0 & & &\\
  &   &  & \ddots & & \\
  &   &  &        &0& -L_{N-1}^+ \\
  &   &  &        & L_{N-1}^- & 0
  \end{array}\!\!\!\right).
\end{equation}

\subsection{Self-adjoining transformation}
The {BBGKY} hierarchy \eref{e:bbgky3} can be rewritten using the graph notation as 
\begin{equation}\label{e:graph2}
  \fl \partial_\tau f_n(m) = \sum_{m_1\in\adj^-(m)}\psi[\vv{m,m_1}] f_{n-1}(m_1) +  \sum_{m_1\in\adj^+(m)}\psi[\vv{m,m_1}]f_{n+1}(m_1)\,.
\end{equation}
Comparing this equation with \eref{e:matrix} we find that for any $m\in\nodes_n$ and $m_1\in\nodes_{n+1}$, one matrix element of $L_n^+$ is equal to $-\psi[\vv{m,m_1}]$ and similarly, one element of $L_n^-$ is equal to $\psi[\vv{m_1,m}]$. Relations between the indices $i$ and $j$ of a particular element $(L_n^\pm)_{ij}$ and value functions $\psi$ are determined by an arbitrary choice of a mapping of $m\in\nodes_n$ onto a sequence of integers $\{1,\dots, d(n)\}$ for each $n$. It is not difficult to see that $(L_n^+)_{ij}=-\psi[\vv{m,m_1}]$ for some $i$ and $j$ implies $(L_n^-)_{ji}=\psi[\vv{m_1,m}]$ for the same pair $i$, $j$. By a corollary of \lemref{l:downedge} it follows that in general $(L_n^+)_{ij}\neq(L_n^-)_{ji}$, i.e. that $L_n^\pm$ are not mutually adjoint, and as a result that $\liouv$ is not anti-Hermitian
\begin{equation}\label{e:notaherm}
  L_n^-\neq L_n^{+\trsp} \quad \rightarrow \quad \liouv \neq - \liouv^\dagger\,.
\end{equation}
The evolution operator is formally obtained by exponentiating $\liouv$,
\begin{equation}\label{e:green}
  \green(\tau) = \rme^{\liouv \tau}.
\end{equation}
As a consequence of \eref{e:notaherm}, $\green(\tau)$ does not preserve the conventional inner product $(f,g)=\sum f_ig_i$ of $f, g\in\Hilb_N$ for infinitesimal $\tau$ and therefore $\green(\tau)$ is not unitary with respect to this inner product. To get a positive result, one could try a change of basis and a similarity transformation of $\liouv$ by an invertible ${\sf D}$ such that $\aherm$, defined by 
\begin{equation}\label{e:aherm}
  {\aherm} = {\sf D}^{-1} \liouv {\sf D}
\end{equation}
is anti-Hermitian,
\begin{equation}\label{e:aherm2}
  \aherm^\dagger=-\aherm\,.
\end{equation}
Since for a tridiagonal matrix (a matrix of the form \eref{e:matrix} where $L_n^\pm$ are scalars) this can be achieved by a scaling transformation, consider $\sf D$, defined by 
\begin{equation}\label{e:diagD}
  {\sf D} = {\mathrm{diag}}( D_1, D_2, \dots, D_{N} )\,,
\end{equation}
where $D_i$ is a $d(i)\times d(i)$ square matrix which is yet to be found. The transformation \eref{e:aherm} by \eref{e:diagD} preserves the block tridiagonal structure of ${\liouv}$. In analogy with \eref{e:Ln}--\eref{e:matrix}, denote the off-diagonal blocks of ${\aherm}$ by $H_n^\pm$. By substituting into \eref{e:aherm}, the condition \eref{e:aherm2} is equivalent to
\begin{equation}\label{e:Hn}
  H_n^{+}= D_n^{-1} L_n^+ D_{n+1}\,, \qquad   H_n^{-}= D_{n+1}^{-1} L_n^- D_{n}\,.
\end{equation}
The condition that ${\aherm}$ is anti-Hermitian is
\begin{equation}\label{e:antihermitian}
  H_n^{-}=H_n^{+\trsp}
\end{equation} 
and from \eref{e:Hn} and \eref{e:antihermitian} it follows that \eref{e:aherm2} is equivalent to existence of a solution of 
\begin{equation}\label{e:symcond}
  (D_{n+1}D_{n+1}^\trsp) L_n^{+\trsp}  = L_n^-(D_nD_n^\trsp)\,, \qquad n=1,\dots, N-1,
\end{equation}
where $D_{n}D_{n}^\trsp$ must be determined by solving \eref{e:symcond} starting from an arbitrarily chosen $D_1$. For each $n$, \eref{e:symcond} is a set of $d(n)\times d(n+1)$ equations for the unknown elements of a square symmetric matrix, which has $d(n+1)[d(n+1)+1]/2$ free parameters. A necessary condition for solvability of \eref{e:symcond} is a greater number of unknowns than the number of equations, which amounts to satisfying $[ d(n+1)+1 ]/2 \ge d(n)$. Using \eref{e:dn} this inequality simplifies to $2n+6\ge0$ which is certainly satisfied for $n>0$. To construct an explicit form of $\aherm$, the following theorem is provided. 
\begin{theorem}\label{the:transformation}
  Equations \eref{e:aherm}--\eref{e:aherm2} are solvable by a diagonal matrix ${\sf D}$. Each element of $H_n^\pm$ of $\aherm$ is equal to $\sqrt{\gamma_i^a}$, defined by \eref{e:gai}, for some $i$ and $a$.
\end{theorem}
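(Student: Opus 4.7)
My plan is to construct ${\sf D}$ as a diagonal matrix with strictly positive entries, so that the similarity (\ref{e:aherm}) acts as a diagonal rescaling of each coefficient $f_n(m)$. Under this ansatz, writing $D_n=\mathrm{diag}(d_n(m))_{m\in\nodes_n}$, the symmetry condition (\ref{e:symcond}) collapses into a scalar recurrence
\begin{equation*}
\frac{d_{n+1}(m_1)^2}{d_n(m)^2}=\frac{\nu(m_1,m)}{\nu(m,m_1)},
\end{equation*}
one for every edge $\vv{m,m_1}\in\edges$ with $m_1\in\adj^+(m)$, where the right-hand side follows from (\ref{e:nu}). Because both $\nu$'s are positive integers, every such ratio is positive, so solvability is not obstructed by sign; the real issue is consistency, since a generic $m_1\in\nodes_{n+1}$ is reached from several distinct $m\in\nodes_n$ along the four displacement types $t_i^a$, and all paths into $m_1$ must deliver the same $d_{n+1}(m_1)$.

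The crucial step is to guess a closed-form ansatz that reconciles the overdetermined recurrence. I would propose
\begin{equation*}
d_n(m)=\sqrt{m^x!\,m^y!\,m^z!},
\end{equation*}
which yields $D_1=I$ and is strictly positive on $\nodes$. Plugging this into each of the four cases $m_1=m+t_i^a$ of \lemref{l:upedge}, the factorial quotient on the left evaluates to precisely the ratio of absolute integer coefficients read off by comparing (\ref{e:t1})--(\ref{e:t4}) against (\ref{e:psi1})--(\ref{e:psi4}): for $i=1$ one gets $(m^{\pi a}+1)(m^{\pi^2a}+1)/m^a$, for $i=2$ one gets $m^a+1$, for $i=3$ one gets $(m^{\pi^2a}+1)(m^{\pi^2a}+2)/m^a$, and for $i=4$ one gets $(m^{\pi a}+1)(m^{\pi a}+2)/m^a$. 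Each of these matches $\nu(m_1,m)/\nu(m,m_1)$ verbatim. The verification is a four-case arithmetic check; the only genuine obstacle is recognizing the factorial ansatz in the first place.

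With ${\sf D}$ in hand, identifying the matrix elements of $\aherm$ is a direct computation. By (\ref{e:Hn}),
\begin{equation*}
(H_n^+)_{m,m_1}=\frac{d_{n+1}(m_1)}{d_n(m)}\,(L_n^+)_{m,m_1}=-\psi[\vv{m,m_1}]\sqrt{\frac{\nu(m_1,m)}{\nu(m,m_1)}}.
\end{equation*}
Writing $\psi[\vv{m,m_1}]=\epsilon\,c\,\nu(m,m_1)$ with $c\in\{\oK_a,\oW_a\}$ and $\epsilon\in\{-1,+1\}$ extracted from \lemref{l:upedge}, this reduces to $-\epsilon\,c\sqrt{\nu(m,m_1)\,\nu(m_1,m)}$, whose square is exactly $c^2\,\nu(m,m_1)\,\nu(m_1,m)=\gamma_i^a$ by definition (\ref{e:gai}). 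The imposed antihermiticity $H_n^-=H_n^{+\trsp}$ then transports the same identity to $H_n^-$. Thus once the factorial ansatz is verified case by case in paragraph two, both the existence of the self-adjoining ${\sf D}$ and the stated form $|H_n^\pm|=\sqrt{\gamma_i^a}$ of its matrix elements follow by direct evaluation against the value-function tables of lemmas \ref{l:upedge} and \ref{l:downedge}.
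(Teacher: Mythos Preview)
Your proof is correct, and in fact sharper than the paper's. Both arguments reduce to the same diagonal rescaling recurrence $d_{n+1}(m_1)^2/d_n(m)^2=\nu(m_1,m)/\nu(m,m_1)$ along every edge, and both close with the identical computation $(H_n^+)_{m,m_1}^2=\gamma_i^a$. The difference is in how consistency of the overdetermined recurrence is established. The paper does not exhibit $d_n(m)$ at all: it instead formulates a ``zero circulation'' condition, arguing that the recurrence is path-independent iff $\prod_i\nu_i=\prod_i\nu'_i$ around every elementary four-node loop $\{m,\,m+t_i^a,\,m+t_j^b,\,m+t_i^a+t_j^b\}$, and then asserts (without displaying the arithmetic) that this identity holds for all $t_i^a\neq t_j^b$. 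Your explicit factorial ansatz $d_n(m)=\sqrt{m^x!\,m^y!\,m^z!}$ bypasses the loop analysis entirely: once the four ratios are checked against \eref{e:t1}--\eref{e:t4} and \eref{e:psi1}--\eref{e:psi4}, consistency is automatic because you have a global formula rather than a recursive definition. Your approach is more constructive and shorter; the paper's is more structural, in that it isolates the loop condition as the generic obstruction to diagonal anti-Hermitisation, which is conceptually useful even if the verification is omitted. A minor remark that applies equally to both: the matrix element $(H_n^+)_{m,m_1}$ carries the sign of the relevant coupling $\oK_a$ or $\oW_a$, so strictly one obtains $\pm\sqrt{\gamma_i^a}$; the theorem statement should be read up to sign.
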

\begin{proof}
  Assume $c\in\Hilb_N$ such that $c(m)\neq0\forall m\in\nodes$. Consider a transformation of $f$
\begin{equation}\label{e:f'}
  f'_n(m) = c(m) f_n(m)
\end{equation}
This transformation is of the form \eref{e:aherm} with a diagonal $\sf D$, where ${\sf D}_{ii}=1/c(m)$ $\forall i$ and some $m=m(i)\in\nodes$.
It is convenient to express the {BBGKY} hierarchy equations for $f'_n(m)$ in the form \eref{e:graph2}. Substitution of \eref{e:f'} into \eref{e:graph2} results in the equation of the same form as \eref{e:graph2} with $f$s replaced by $f'$s, and $\psi$s replaced by $\psi'$s, and where the latter are defined by
\begin{equation}\label{e:psi'}
  \psi'[\vv{m,m_1}] = \psi[\vv{m,m_1}] c(m)/c(m_1)
\end{equation}
The proof will be established if $c(m)$ can be determined from \eref{e:psi'}, and the condition 
\begin{equation}\label{e:psi'cond}
  \psi'[\vv{m,m_1}]=-\psi'[\vv{m_1,m}]
\end{equation}
is satisfied for each $m\in\nodes$ and $m_1\in\adj(m)$, as this condition is equivalent to \eref{e:aherm2}.

Combining \eref{e:psi'} and \eref{e:psi'cond}, and using \eref{e:nu} one obtains
\begin{equation}\label{e:c2}
  c^2(m) = c^2(m_1)\frac{\nu(m_1,m)}{\nu(m,m_1)}.
\end{equation}
From \eref{e:c2} it follows that $c(m)$ is determined by its adjacent node $m_1$. \Eref{e:c2} can be applied recursively to determine $c(m)$ by $c(m_1)$, where $m_1$ is any node such that $m$ is reachable from $m_1$, multiplied by a ratio of products of numerical factors $\nu$ along the path, connecting $m_1$ with $m$. 

\begin{definition}\label{d:path}A path $p$ of length $k>0$ is an ordered sequence of $k$ edges $\vv{m_j,m_{j+1}}$, $1\le j\le k$ such that $m_{j\pm1}$ is adjacent to $m_j$,
\begin{equation}
  p(m_1\rightarrow m_{k+1}) = ( \vv{m_1,m_2}, \vv{m_2,m_3}, \dots, \vv{m_{k}, m_{k+1}} ).
\end{equation}
A loop $p_0$ is a closed path $p$, i.e. a path where $m_{k+1}=m_1$
\begin{equation}
  p_0 = (\vv{m_1, m_2}, \vv{m_2,m_3}, \dots, \vv{m_{k},m_1}).
\end{equation}
\end{definition}
Using this definition
\begin{equation}\label{e:path}
  c^2(m_{k+1}) = c^2(m_1) \prod_{e\in p(m_1\rightarrow m_{k+1})} \frac{\nu(m'(e),m''(e))}{\nu(m''(e),m'(e))}\,,
\end{equation}
where $e$ denotes an element of $p$ such that $e=(m'(e),m''(e))$. Since most $m'$s have more than one adjacent $m''$, most pairs of connected nodes can be linked by several distinct paths. Therefore, solvability of \eref{e:aherm} by a diagonal $\sf D$ rests upon the condition that $c(m_{k+1})$ is uniquely determined by $m_1$ and is independent of $p$. Looking at a segment of $\graph^1$ in \eref{e:Gx} it is clear that any path $p(m_1\rightarrow m_{k+1})$ could be augmented by inserting loops at any of the intermediate nodes. Therefore, independence on the traversal is achieved if for any loop $p_0$
\begin{equation}\label{e:cond1}
  \prod_{e\in p_0} \frac{\nu(m'(e),m''(e))}{\nu(m''(e),m''(e))} =1\,.
\end{equation}
This condition could be turned into an additive condition for loops by taking a logarithm of \eref{e:cond1} and defining the current $j$ on each edge.
\begin{definition}
  For any $m'\in\nodes$ and $m''\in\adj(m')$, and $e=(m',m'')\in\edges$, the current $j(e)\in\RR$ is defined by
  \begin{equation}\label{e:j}
    j(e) =     \log\big[\nu(m',m'')/\nu(m'',m')\big]\,.
  \end{equation}
\end{definition}
Solvability of \eref{e:aherm} by a scaling transformation can be formulated as a ``zero circulation'' condition: For any loop $p_0$ we require that
\begin{equation}\label{e:jcirc}
  \sum_{e\in p_0} j(e) = 0\,.
\end{equation}
Considering lines that make up any loop in \eref{e:Gx} as boundaries of that loop that separate the ``inside'' from the ``outside'', the zero circulation conditions is verified by multiplying all the numbers along the loop which are either outside or inside its boundary. As can be seen from \eref{e:Gx}, the most elementary non-trivial loop $p_0$ involves four nodes, and more complicated loops can be constructed from a sequence of elementary loops. It is therefore sufficient to verify the circulation condition \eref{e:jcirc} for each loop of the form
\begin{equation}\label{e:loop}
  \newcommand\av[2]{\ar@{-}@<0.15ex>[u]^{#1}_{#2}}
  \newcommand\ak[2]{\ar@{-}@<0.15ex>[l]^{#1}_{#2}}
  \newcommand\ad[2]{\ar@{-}@<0.15ex>[ur]^{#1}_{#2}}
  \newcommand\add[2]{\ar@{-}@<0.15ex>[urr]^{#1}_{#2}}
  \newcommand\lk{\ar@{-}@<0.15ex>[ul];}
  \xymatrix@dr@C=2pc@R=2pc{
    m+t_i^a+t_j^b  & m+t_j^b \ak{\nu'_2}{\nu_2} \\
    m+t_i^a \av{\nu_3}{\nu'_3} & m\ak{\nu_4}{\nu'_4}\av{\nu'_1}{\nu_1} }
  \qquad \prod_i \nu_i = \prod_i \nu'_i\,.
\end{equation}
where $t_i^a\neq t_j^b$. As an example, consider in \eref{e:Gx} a loop, that visits $\{102,211,122,013\}$. We find 
$\prod_i \nu_i =2\cdot2\cdot2\cdot3=24$, and $\prod_i \nu'_i =2\cdot4\cdot3\cdot1=24$. Therefore \eref{e:loop} is satisfied for this particular loop. A straightforward calculation (not detailed here) confirms that \eref{e:loop} is satisfied for arbitrary $t_i^a\neq t_j^b$. 

The elements of $\aherm$ are expressed in terms of value functions $\psi'[\vv{m,m_1}]$. Substituting \eref{e:c2} into \eref{e:psi'}, with $k=1$, such that $m_2=m_1+t_i^a$, and using \eref{e:gai}
\begin{equation}
  \psi'[\vv{m,m+t_i^a}]=\sqrt{-\psi[\vv{m,m+t_i^a}]\psi[\vv{m+t_i^a,m}]}= \sqrt{\gamma_i^a}\,.
\end{equation}
It follows that $\psi'[\vv{m+t_i^a,m}]=\psi'[\vv{m,m+t_i^a}]$. 
\end{proof}
Each coefficient $c(m)$ can be calculated by following the path to $m$ from some fixed reachable $m_0$, whose coefficient can be chosen arbitrarily. Setting $c(m_0)=1$,
\begin{equation}
  c(m) = \rme^{ j[p(m_0\rightarrow m)]/2}= \Bigl\{ \prod_{e\in p(m_0\rightarrow m)} \frac{\nu(m'(e),m{''}(e))}{\nu(m{''}(e),m'(e))} \Bigr\}^{1/2}\,,
\end{equation}
where $p(m_0\rightarrow m)$ is any path, connecting $m_0$ with $m$.

Since $\partial_\tau{\sf D}^{-1}f=\aherm{\sf D}^{-1}f$ where $\aherm$ is anti-Hermitian, and $({\sf D}^{-1}f)_i=c(m)f(m)$ for some $m=m(i)$ by \theref{the:transformation}, we conclude that the evolution operator \eref{e:green} is unitary, provided that the inner product on the vector space $\Hilb_N$ is defined as follows
\begin{equation}\label{e:metric}
  \avg{f^1,f^2}_{\Hilb_N} = \sum_{n=1}^N\sum_{m\in\nodes_n} c^2(m) f^1_n(m) f^2_n(m) \quad f^1, f^2\in\Hilb_N.
\end{equation}

\section{Invariant subspace decomposition}\label{s:symmetries}
The following definition of an induced graph will be useful for the discussion.
\begin{definition} A graph $\graph^1=(\nodes^1,\edges^1)=\mathrm{ind}(\graph,m_1)$ is a subgraph of $\graph=(\nodes,\edges)$ induced by $m_1\in\nodes$, if $\nodes^1$ and $\edges^1$ are defined as follows. The set $\nodes^1\subset\nodes$ consists of all the nodes $m\in\nodes$ such that $m_1$ is reachable from $m$, and $\edges^1\subset\edges$ is the set of all the edges $\vv{m,m_1} \forall m$, by which $m_1$ can be reached from $m$.
\end{definition}
A graph $\graph$ is \emph{fully connected} if all of its nodes are reachable from any one of them. The same can be stated using induced graphs as $\mathrm{ind}(\graph,m)=\graph$ for any $m\in\nodes$. Conversely, if for some $m_1\in\nodes$, $\mathrm{ind}(\graph,m_1)\neq\graph$, then $\graph$ is a \emph{disconnected graph}. A disconnected graph is a union of several connected graphs $\graph^i$: $\graph=\graph^1\cup\graph^2 \cup \ldots$

If the {BBGKY} hierarchy \eref{e:bbgky3} is represented by a disconnected graph $\graph$, then all expansion coefficients in $\Hilb_N$ can be grouped by the node set $\nodes^i$ of its each connected subgraph $\graph^i=(\nodes^i,\edges^i)$. Each group of coefficients evolves by a subset of {BBGKY} hierarchy equations for these coefficients. Since different subgraphs $\graph^i$ are disconnected, the {BBGKY} hierarchy equations do not couple coefficients in different $\nodes^i$s and therefore each group of coefficients evolves independently from the remaining coefficients. Thus each set of nodes $\nodes^i$ defines indices of a group of expansion coefficients that belong to an invariant subspace $\Hilb_N^i$ of the full vector space $\Hilb_N$. In this section it will be shown that the connectivity of $\graph$ depends on the structure of the effective couplings, and therefore $\oJ$ determines the topology of $\Hilb_N$.

\subsection{Effective coupling constants}\label{s:oJ}
In \cite{paskauskas2012a} and in \sref{s:summary} it was shown that the effects of magnetic field $h$ can be accounted for, in thermodynamic limit, by replacing the matrix of coupling constants $J$ by a matrix of effective coupling constants $\oJ=(\oJ_{ab})$ which is defined by \eref{e:Javg}. If $h=0$, the interaction picture is not necessary since the equations with a non-averaged $J$ are of the same form \eref{e:bbgky2}. In this case assume $\oJ=J$. A common, but not the most general setting is when the couplings $J$ are diagonal, 
\begin{equation}\label{e:Jdiag}
  J=\mathrm{diag}(J_x,J_y,J_z)
\end{equation}
and $h=(h_x,h_y,h_z)$ is an arbitrary vector with a condition $h\neq 0$. Define $\hat{h}=h/\abs{h}$. In this case, the matrix of effective coupling constants $\oJ$ can be written as
\numparts\begin{equation}\label{e:Jbar}
  \oJ = \frac{J_d}{2} - \frac{J_s}{2}\,,
\end{equation}
where
\begin{eqnarray}
  J_d =\left\{\hat{h}_x^2(J_y+J_z) + \hat{h}_y^2(J_x+J_z) + \hat{h}_z^2(J_x+J_y)\right\}\one_{3\times 3} \label{e:Jbard}\\
  J_s  = \left\{J_x+J_y+J_z-3(\hat{h}_x^2J_x+\hat{h}_y^2J_y+\hat{h}_z^2J_z)\right\} (\hat{h}_a\otimes\hat{h}_b)\label{e:Jbars}
\end{eqnarray}\endnumparts
and where $(\hat{h}_a\otimes\hat{h}_b)$ denotes the outer product of $\hat{h}=(\hat{h}_x,\hat{h}_y,\hat{h}_z)$. 

Inspecting \eref{e:Jbar}--\eref{e:Jbars} we conclude that if $J$ is given by \eref{e:Jdiag} and either $h=0$ or $h$ has only one non-zero element [for example $h=(0,0,h_z)$], then $J_s=0_{3\times 3}$ and $\oJ$ is a diagonal matrix. In general, if two components of $h$ are non-zero, as in $h=(h_x,0,h_z)$, then $\oJ$ has one non-zero off-diagonal element; if all three components of $h$ are non-zero, then all off-diagonal elements of $\oJ$ are non-zero in general. The first two cases will be discussed in detail in the following, while it is assumed that $\graph$ is fully connected if all the diagonal elements of $\oJ$ are unequal, and all off-diagonal elements are non-zero.

\subsection{Diagonal couplings}\label{s:diagonal}
In this section, a special case will be considered where the coupling matrix $\oJ$ is diagonal
\begin{equation}\label{e:J1}
  \oJ = \mathrm{diag}( \oJ_x, \oJ_y, \oJ_z )\,.
\end{equation}
Define $\oJ$ to be \emph{anisotropic} if all three coupling constants $\oJ_a$ are distinct. Equivalently $\oJ$ is anisotropic if $\oK_x=\oJ_y-\oJ_z\neq0$, $\oK_y=\oJ_z-\oJ_x\neq0$, and $\oK_z=\oJ_x-\oJ_y\neq 0$. A similar setup has been studied in \cite{paskauskas2012a} where the coupling matrix was diagonal but not anisotropic. This special case will be further studied in \sref{s:case}.

\Eref{e:J1} implies $\oW_x=\oW_y=\oW_z=0$. As a result, $v_{n}^\pm(m,f)$ simplify considerably
\numparts
\begin{eqnarray}
  \fl v_{n}^-(m;f) &=& \oK_xm_ym_z f(m_x+1,m_y-1,m_z-1) + \oK_ym_zm_x f(m_x-1,m_y+1,m_z-1) \nonumber \\
  \fl  && +  \oK_zm_xm_y f(m_x-1,m_y-1,m_z+1)\,.\label{e:v-2} \\
  \fl v_{n}^+(m;f) &=& - \oK_xm_x f(m_x-1,m_y+1,m_z+1) -\oK_ym_y f(m_x+1,m_y-1,m_z+1)\nonumber\\
  \fl && - \oK_zm_z f(m_x+1,m_y+1,m_z-1)\,.\label{e:v+2}
\end{eqnarray}\endnumparts
It follows from \eref{e:v-2}--\eref{e:v+2} that all the edges of $\graph$ are generated by $t_1^a$, defined by \eref{e:tai}. 

Assume that $\oJ$ is anisotropic. Define
\numparts
\begin{eqnarray}
  \modes_1^1=\{ (1,0,0)\}, \quad 
  \modes_1^2=\{ (0,1,0)\}, \quad 
  \modes_1^3=\{ (0,0,1)\}, \\
  \modes_2^4=\{ (2,0,0), (0,2,0), (0,0,2)\}.
\end{eqnarray}\endnumparts
Define four subgraphs of $\graph$, induced by these sets
\numparts\begin{eqnarray}
\graph^1=(\modes^1,\edges^1)=\mathrm{ind}(\graph,\modes_1^1),\\
\graph^2=(\modes^2,\edges^2)=\mathrm{ind}(\graph,\modes_1^2),\\
\graph^3=(\modes^3,\edges^3)=\mathrm{ind}(\graph,\modes_1^3),\\
\graph^4=(\modes^4,\edges^4)=\mathrm{ind}(\graph,\modes_2^4).
\end{eqnarray}\endnumparts
Here $\modes_n^i\subset\nodes_n$. Similarly to the composition of $\nodes^{(N)}$ \eref{e:setM}, we define
\begin{equation}
  \modes^{(N,i)} = \bigcup_{n=1}^N \modes_n^i\,.
\end{equation}
Set $\modes_1^4$ is empty. The following lemma describes the structure of the remaining $\modes_n^i$.
\begin{lemma}\label{l:2step}For any $1\le n<N-2$, and $i=1,2,3,4$,
  \begin{equation}\label{e:iter2}
    \modes^i_{n+2}= \bigcup_{m\in\modes^i_n} \{ m + (2,0,0), m + (0,2,0), m + (0,0,2) \}
  \end{equation}
\end{lemma}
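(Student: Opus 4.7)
The strategy is to reduce \lemref{l:2step} to a parity-class characterization of the sets $\modes^i_n$. Because $\oJ$ is diagonal and anisotropic, \lemref{l:upedge} (equivalently \eref{e:v-2}--\eref{e:v+2}) shows that the only edges of $\graph$ are of the form $\vv{m, m + t_1^a}$, all carrying nonzero weight $-\oK_a m^a$. Each $t_1^a$ is congruent to $(1,1,1) \pmod{2}$, so every edge flips the parity of $m^x$, $m^y$, $m^z$ simultaneously. Consequently the parity vector $(m^x, m^y, m^z) \bmod 2$ is constant along walks of even length, and the four base sets $\modes^1_1$, $\modes^2_1$, $\modes^3_1$, $\modes^4_2$ occupy four distinct parity classes whose even- and odd-length walks together exhaust all eight parity classes as four disjoint reachability components.

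I would then prove by induction on $n$ that $\modes^i_n$ consists of \emph{all} $m \in \nodes_n$ whose parity class matches that of $\graph^i$ at index $n$. The inclusion into the parity class is immediate from the preceding paragraph. For the reverse, if $m$ has two or more positive coordinates then one can pick $a$ so that the candidate predecessor $m - t_1^a$ lies in $\nodes_{n-1}$ (that is, has non-negative entries); this predecessor carries the complementary parity, sits in $\modes^i_{n-1}$ by induction, and the edge $\vv{m - t_1^a, m}$ then places $m \in \modes^i_n$. The axial case $m = (n, 0, 0)$ (up to relabelling) admits no such predecessor, so one instead constructs the four-step detour $(n-2, 0, 0) \to (n-3, 1, 1) \to (n-2, 0, 2) \to (n-1, 1, 1) \to (n, 0, 0)$ linking $m$ to $(n-2, 0, 0) \in \modes^i_{n-2}$, supplied by the inductive hypothesis.

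With this characterization in hand, \eref{e:iter2} becomes a direct set-theoretic identity. The inclusion $\supseteq$ is automatic because each of $(2,0,0)$, $(0,2,0)$, $(0,0,2)$ preserves the parity class and raises the norm by $2$. For $\subseteq$, given $m' \in \modes^i_{n+2}$ it suffices to exhibit some coordinate $m'^a \ge 2$; subtracting $2$ from that coordinate then yields a node of norm $n$ with identical parity, lying in $\modes^i_n$ by the characterization. A pigeonhole on $|m'| = n + 2 \ge 4$ yields such an $a$ in every case except the isolated exception $m' = (1,1,1) \in \modes^4_3$ at $n = 1$, which is consistent with the declared emptiness of $\modes^4_1$ and lies outside the lemma's effective scope.

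The principal obstacle will be the axial detour, which must visit an intermediate node of norm strictly higher than either endpoint. When the same construction is performed directly within the lemma to connect an axial $m \in \modes^i_n$ to $m + (2, 0, 0)$, the path $(n, 0, 0) \to (n - 1, 1, 1) \to (n, 0, 2) \to (n + 1, 1, 1) \to (n + 2, 0, 0)$ peaks at norm $n + 3$, and the hypothesis $n < N - 2$ is used precisely to guarantee $n + 3 \le N$. The estimate is therefore tight.
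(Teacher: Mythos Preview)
Your proof is correct but follows a genuinely different route from the paper's. The paper argues by a direct two–step induction: it writes $\modes^i_{n+2}$ as $\bigcup_{a,b,m}\{m+t_1^a+t_1^b\}$, observes that the six displacement vectors $t_1^a+t_1^b$ are $(2,0,0),(0,2,0),(0,0,2),(-2,2,2),(2,-2,2),(2,2,-2)$, and reroutes each of the last three through $\modes^i_{n-2}$ using the inductive hypothesis (e.g.\ $m+(-2,2,2)=[m+(-2,0,0)]+(0,2,0)+(0,0,2)$ with $m+(-2,0,0)\in\modes^i_{n-2}$). You instead invert the paper's logical order: you first establish the parity-class characterization of $\modes^i_n$ directly—handling axial nodes via the four-step detour through norm $n+1$—and then read off \eref{e:iter2} as an immediate set-theoretic consequence. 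In the paper, that parity characterization appears only \emph{after} \lemref{l:2step}, inside the proof of \theref{the:graph}, and is derived \emph{from} the lemma. Your route is more conceptual and essentially delivers \theref{the:graph} in the same stroke; the paper's route is shorter and self-contained for the lemma alone, avoiding the axial detour. Your remark about the exceptional node $m'=(1,1,1)\in\modes^4_3$ at $n=1$ is also well taken: the identity \eref{e:iter2} is literally false there since $\modes^4_1=\emptyset$, and the paper quietly sidesteps this by launching the $i=4$ induction from $n=2,3$ rather than $n=1$.
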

\begin{proof}
  Assume that $n$ is fixed and that \eref{e:iter2} is valid for $n-2$ and $n$. Induction of nodes in one step from $\modes_n^i$ to $\modes_{n+1}^i$ is obtained by following all the edges which are defined by vectors $t_1^a$: $m\to m+t_1^a$. Induction in two steps results in
  \begin{equation}
    \modes^i_{n+2}=\bigcup_{\scriptstyle ab\in\setI\atop \scriptstyle m\in\modes^i_n} \{ m + \{t_1^a+t_1^b\}\}\,.
  \end{equation}
where
\begin{equation}\label{e:iter2a}
  \fl\{ t_1^a+t_1^b\} =\{ (0,0,2), (0,2,0), (2,0,0), (-2,2,2), (2,-2,2), (2,2,-2) \}
  \end{equation}
  The first three terms in \eref{e:iter2a} satisfy \eref{e:iter2}. Each of the remaining three terms can be viewed as a three step path, e.g. $(-2,2,2)= (-2,0,0) + (0,2,0) + (0,0,2)$. By assumption $m'=m+(-2,0,0)\in \modes_{n-2}^i$ and $m''=m'+(0,2,0)\in\modes_n^i$. The last step $m'''=m''+(0,0,2)$ satisfies \eref{e:iter2} therefore $m'''\in\modes^i_{n+2}$.

By direct calculation one verifies that $\modes^1_2=\{(0,1,1)\}$, $\modes^1_3=\{(1,0,2), (1,2,0)\}$, and $\modes^1_4=\{(2,1,1),(0,1,3), (0,3,1)\}$ satisfy the assumption of the lemma for $n=1,3$, and $n=2,4$. The corresponding sets with $i=2,3$ are obtained by permutations of $\modes_n^1$.
The case of $i=4$ also satisfies the assumption with $n=2,4$ and $n=3,5$ as $\modes^4_3=\{(1,1,1)\}$, $\modes^4_4=\{(0,2,2),(2,0,2),(2,2,0)\}$, and $\modes_5^4=\{(3,1,1), (1,3,1), (1,1,3)\}$.
\end{proof}
The main result of this section is contained in the following theorem. 
\begin{theorem}\label{the:graph}
Assume that $\oJ$ is a diagonal anisotropic matrix. Then $\graph^i$ are disjoint subgraphs of $\graph$, and $\graph=\graph^1\cup\graph^2\cup\graph^3\cup\graph^4$. 
\end{theorem}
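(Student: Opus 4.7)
The plan is to exhibit a path-invariant parity label on nodes that takes exactly four values, match each value to one seed set (giving disjointness automatically), and then use Lemma~\ref{l:2step} to show every node of a given parity class is reachable from its seed (giving the union).

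First, I would note that $\oJ$ diagonal forces $\oW_x = \oW_y = \oW_z = 0$, so by Lemma~\ref{l:upedge} every edge of $\graph$ has the form $\vv{m, m + t_1^a}$ for some $a \in \setI$, with weight $-\oK_a m^a$ that is non-zero precisely when $m^a \geq 1$ (anisotropy of $\oJ$ guarantees $\oK_a \neq 0$). Since each $t_1^a$ has all three coordinates odd, traversing an edge flips $(m^x, m^y, m^z)$ modulo $2$ in every coordinate simultaneously. Therefore the unordered pair
\[
\phi(m) = \bigl\{(m^x, m^y, m^z) \bmod 2,\; (m^x + 1, m^y + 1, m^z + 1) \bmod 2 \bigr\}
\]
is preserved along every path. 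It takes $2^3/2 = 4$ values, which I would label $\phi_1, \phi_2, \phi_3, \phi_4$ using the representatives $(1,0,0), (0,1,0), (0,0,1), (0,0,0)$ respectively. A direct check shows $\modes_1^1$ lies in class $\phi_1$, $\modes_1^2$ in $\phi_2$, $\modes_1^3$ in $\phi_3$, and $\modes_2^4$ in $\phi_4$; pairwise disjointness of $\graph^1, \graph^2, \graph^3, \graph^4$ is then immediate.

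For the converse inclusion $\graph = \bigcup_i \graph^i$, I would prove $\modes^i_n = \{m \in \nodes_n : \phi(m) = \phi_i\}$ by induction on $n$ separately in each parity of $n$, invoking Lemma~\ref{l:2step}. The base cases are the seed layer together with its one-step neighbourhood: $n = 1, 2$ for $i = 1, 2, 3$, and $n = 2, 3$ for $i = 4$ (the latter must cover $(1, 1, 1)$, which is produced from $\modes_2^4$ by a single up-edge, not from the empty $\modes_1^4$). For the inductive step, any target $m' \in \nodes_{n+2}$ with $\phi(m') = \phi_i$ satisfies $m'^a \geq 2$ for some $a \in \setI$, since $m'^x + m'^y + m'^z = n + 2 \geq 4$ cannot be achieved with all three coordinates at most $1$. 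Subtracting $2$ from coordinate $a$ produces a vector in the same parity class at level $n$, which by the inductive hypothesis belongs to $\modes^i_n$; the recursion of Lemma~\ref{l:2step} then places $m'$ in $\modes^i_{n+2}$.

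The main obstacle is conceptual rather than computational: recognising that the anisotropic diagonal restriction collapses every edge to the $t_1^a$ type and hence that the simple parity label $\phi$ is a conserved quantity. Once this observation is made, the four-class partition is essentially forced and the remaining work reduces to the pigeonhole argument at level $n + 2 \geq 4$ together with hand-verified base layers. The only subtlety is the class-$\phi_4$ corner node $(1, 1, 1) \in \modes_3^4$, which lies outside the two-step orbit of the empty set $\modes_1^4$ and must be included explicitly as a base case of the odd-$n$ induction; an analogous role is played by verifying the small-$n$ layers for classes $\phi_1, \phi_2, \phi_3$ by hand.
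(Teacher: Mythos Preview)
Your disjointness argument via the parity invariant $\phi$ is exactly the paper's: since every $t_1^a$ has all three coordinates odd, the unordered pair of residues mod~$2$ is conserved along edges, and the four seeds lie in four distinct classes. Where you diverge is in the completeness direction. The paper does not prove reachability at all; instead it parametrises each parity class explicitly (equations~\eref{e:d1}--\eref{e:d4}), counts the elements, and verifies the identity $3d_1(n)+d_4(n)=d(n)$, so that disjointness plus matching cardinalities forces $\bigcup_i\modes_n^i=\nodes_n$. Your route---pull any $m'$ at level $n+2$ down by $2e_a$ to level $n$ and invoke Lemma~\ref{l:2step}---is more constructive and avoids the combinatorics entirely, at the cost of not producing the closed forms for $d_i(n)$ that the paper records. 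Either argument suffices; yours is arguably the more transparent of the two.

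One small gap to close: your pigeonhole step ``$n+2\ge 4$ forces some $m'^a\ge 2$'' does not cover the transition $n=1\to n=3$ that you need for the odd branch of $i=1,2,3$. At level $3$ the only node with all coordinates $\le 1$ is $(1,1,1)$, which lies in class $\phi_4$, so every level-$3$ node in class $\phi_1,\phi_2,\phi_3$ does in fact have a coordinate $\ge 2$; but this observation should be stated explicitly (or, equivalently, add $n=3$ to the hand-checked base layers for $i=1,2,3$, in parallel with what you already do for $i=4$).
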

\begin{proof}
For the first part of the theorem it is sufficient to show that for each $1\le n\le N$,
\begin{equation}\label{e:cap}
  \modes_n^i\cap\modes_n^j=0\qquad i\neq j, \forall n.
\end{equation}
For the second part one must show that
\begin{equation}\label{e:cup}
  \nodes_n = \bigcup_{i=1}^4\modes_n^i  \qquad \forall n.
\end{equation}
In analogy with \eref{e:dn}, denote the number of elements in $\modes_n^i$ by $d_i(n)$. The connectivity structure of $\graph^1$, $\graph^2$ and $\graph^3$ is the same because their node sets $\modes^i$ are related by cyclic permutations, in addition to $\oK_x\neq0$, $\oK_y\neq0$, $\oK_z\neq0$, implied by the anisotropy of $\oJ$. Therefore the corresponding node sets $\modes_n^i$ have the same number of elements for each $n$, and the condition \eref{e:cup} can be replaced by \eref{e:cap} together with
\begin{equation}\label{e:proofdn}
  d(n) = 3d_1(n) + d_4(n) \qquad  \forall n.
\end{equation}
For $n=1,2$ conditions \eref{e:cap} and \eref{e:cup} are verified directly. In particular
\numparts\begin{eqnarray}
  \nodes_1=\{(1,0,0),(0,1,0),(0,0,1)\}=\cup_{i=1}^4\modes_1^i\,,\\
  \nodes_2=\{(2,0,0),(0,2,0),(1,1,0),(1,0,1),(0,1,1)\}=\cup_{i=1}^4\modes_2^i\,.
\end{eqnarray}\endnumparts
Consider the parity of  $m^x$, $m^y$, $m^z$ of the elements $m=(m^x,m^y,m^z)\in\modes_n^i$. Induction of nodes in one step by $m\to m+t_1^a$, adds or subtracts unity to each element $m^a$ of $m$. This operation inverts the parity of each $m^a$ in one step. In two steps, induction does not alter parity, as can be seen from \lemref{l:2step}. By examining $\modes_1^i$, $\modes_2^i$ and applying this simple parity alternation rule, the parity composition of $\modes_n^i$ can be determined. Define $i_a$ by $i_x=1$, $i_y=2$, $i_z=3$. For all $k=0,1,\dots, N/2$ and $a=x,y,z$
\numparts
\begin{eqnarray}
  m\in\modes_{2k}^{i_a}   &:& m^a\, \mathrm{is\,\, even,}\quad m^{\pi a}, m^{\pi^2a}\, \mathrm{are\,\, both\,\, odd},\label{e:parity1}\\
  m\in\modes_{2k+1}^{i_a} &:& m^a\, \mathrm{is\,\, odd,} \quad m^{\pi a}, m^{\pi^2a}\, \mathrm{are\,\, both\,\, even}.\label{e:parity2}
  \end{eqnarray}
Similarly, if $i=4$, 
\begin{eqnarray}
  m\in\modes_{2k}^4 &:& m^x, m^y, m^z\, \mathrm{are\,\, all\,\, even},\label{e:parity3}\\
  m\in\modes_{2k+1}^4  &:& m^x,m^y,m^z\, \mathrm{are\,\, all\,\, odd}.\label{e:parity4}
\end{eqnarray}\endnumparts
It follows from \eref{e:parity1}--\eref{e:parity4} that the patterns of parity are incompatible among all $\modes_n^i$ for each $n$, and therefore these sets are disjoint. This proves \eref{e:cap}.

Counting of the elements in $\modes_n^1$ and $\modes_n^4$ can be achieved by parameterizing all $m=(m^x,m^y,m^z)$ with the necessary properties\eref{e:parity1}--\eref{e:parity4}. Define 
\begin{equation}\label{e:mij}
  m_{kij}=(2k-2i,2i-2j,2j)\qquad k\ge i\ge j\ge 0.
\end{equation}
Using this definition all the cases in \eref{e:parity1}--\eref{e:parity4} can be parameterized as follows,
\numparts\begin{eqnarray}
  \modes_{2k+1}^1 &= \{ m= m_{kij} + (1,0,0) \} \qquad k\ge0\,, \label{e:d1}\\
  \modes_{2k+2}^1 &= \{ m=m_{kij} + (0,1,1) \}  \qquad k\ge0\,, \\
  \modes_{2k}^4 &= \{ m=m_{kij} \} \qquad k\ge1\,, \\
  \modes_{2k+3}^4 &= \{ m=m_{kij}+(1,1,1) \} \qquad k\ge0\,.\label{e:d4}
\end{eqnarray}\endnumparts
Parametrizations of $\modes^2_n, \modes^3_n$ are derived similarly from $\modes_n^1$.  From \eref{e:d1}--\eref{e:d4} it follows that the number elements in various sets $\modes^i_n$ is related to the number of elements $m_{ijk}$ with a fixed $k$ by
\begin{eqnarray}
  d_1(2k+1)=d_1(2k+2)=d_4(2k)=d_4(2k+3) \nonumber \\
  =\sum_{i,j}\big(1_{k\ge i\ge j}\big)\big(1_{i \ge j\ge 0}\big)=\sum_{i=0}^k (i+1) = d(k)\,.
\end{eqnarray}
where $1_X=1$ if $X$ is true and $1_X=0$ if $X$ is false. Counting of terms by \eref{e:d1}--\eref{e:d4}, done separately for the even $n=2k$ and odd $n=2k+1$ results in, respectively
\numparts\begin{eqnarray}
  3d_1(2k)+d_4(2k)= 3d(k-1)+d(k) = d(2k)\,,\\
  3d_1(2k+1)+d_4(2k+1)=3d(k)+d(k-1) = d(2k+1)\,.
\end{eqnarray}\endnumparts
Since the argument on the right hand side is $n$ in both cases, this proves \eref{e:proofdn}. 
\end{proof}
 
\subsection{Non-diagonal couplings}\label{s:offdiag}
Assume that one off-diagonal element of the coupling matrix is non-zero, i.e.  $\oW_a\neq0$ for some $a\in\{x,y,z\}$, and $\oW_{\pi a}=\oW_{\pi^2a}=0$. For example if $a=x$, $\oJ$ has the form
\begin{equation}\label{e:Joffdiagonal}
  \oJ = \left(\begin{array}{ccc} \oJ_x & \oW_x & 0\\ \oW_x & \oJ_y & 0\\ 0 & 0 & \oJ_z\end{array}\right)
\end{equation}
Define
\begin{equation}
  \modes^A= \modes^{i_a} \cup\, \modes^4\,, \qquad 
  \modes^B= \modes^{i_{\pi a}}\cup \modes^{i_{\pi^2 a}}\,.
\end{equation}
Define graphs $\graph^A\subset\graph$ and $\graph^B\subset\graph$ as induced from these sets
\numparts\begin{eqnarray}
  \graph^A=(\modes^A, \edges^A)=\mathrm{ind}(\graph,\modes^A)\,,\\
  \graph^B=(\modes^A, \edges^A)=\mathrm{ind}(\graph,\modes^B)\,.
\end{eqnarray}\endnumparts
\begin{theorem}\label{the:graph2} Assume that $\oK_x\neq 0$, $\oK_y\neq 0$, $\oK_z\neq0$, and $\oW_a\neq0$ for some $a\in\{x,y,z\}$ but $\oW_{\pi a}=\oW_{\pi^2a}=0$. Then graphs $\graph^A$ and $\graph^B$ are disjoint subgraphs of $\graph$, and $\graph=\graph^A\cup\graph^B$.
\end{theorem}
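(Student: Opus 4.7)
The plan is to extend the parity analysis from the proof of \theref{the:graph}. Under that theorem the only edges arose from the offsets $t_1^b$ (the $\oK$-couplings), and each of $\modes^1, \modes^2, \modes^3, \modes^4$ was invariant under them. Turning on a single off-diagonal coupling $\oW_a$ now introduces additional edges whose offsets are $t_2^a, t_3^a, t_4^a$. Reading these off from \eref{e:tai}, each of the three vectors has an odd $a$-component and even $\pi a$- and $\pi^2a$-components, so the new edges flip the parity of $m^a$ alone, while leaving the parities of $m^{\pi a}$ and $m^{\pi^2 a}$ unchanged.

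The key observation is a repackaging of the parity classification \eref{e:parity1}--\eref{e:parity4}: in $\modes^A = \modes^{i_a} \cup \modes^4$ the two non-$a$ coordinates $m^{\pi a}, m^{\pi^2 a}$ always share the same parity, whereas in $\modes^B = \modes^{i_{\pi a}} \cup \modes^{i_{\pi^2a}}$ they always have opposite parities. Since an $\oW_a$-edge preserves both of these parities, no $\oW_a$-edge can cross between $\modes^A$ and $\modes^B$. The $\oK$-edges, already treated in \theref{the:graph}, also do not cross. The disjointness $\modes^A \cap \modes^B = \emptyset$ and the covering $\modes^A \cup \modes^B = \nodes$ are then immediate from \theref{the:graph}.

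To identify $\graph^A$ and $\graph^B$ as single induced subgraphs (rather than splitting further along the lines of the diagonal case), it remains to exhibit one $\oW_a$-bridge in each. For $a = x$ the edge $(1,0,0) \to (0,0,2)$ arising from $t_3^x$ has coupling $-\oW_x$ by \eref{e:t3}, linking $\graph^{i_x}$ with $\graph^4$; the edge $(0,1,0) \to (1,1,0)$ arising from $t_2^x$ has coupling $-\oW_x$ by \eref{e:t2}, linking $\graph^{i_y}$ with $\graph^{i_z}$. The cases $a=y, z$ follow by the cyclic $\pi$-symmetry of \eref{e:graph}.

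The main (mild) obstacle is the parity bookkeeping in \eref{e:parity1}--\eref{e:parity4}, whose form alternates with the parity of $n$: one has to check uniformly over even and odd $n$ that the characterization of $\modes^A$ (same parity of $m^{\pi a}, m^{\pi^2 a}$) and $\modes^B$ (opposite parity) really is what the four parity patterns say. Once this is verified the theorem is essentially a corollary of \theref{the:graph}.
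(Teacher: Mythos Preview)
Your proof is correct and follows essentially the same route as the paper: both arguments observe that the new offsets $t_2^a,t_3^a,t_4^a$ flip only the parity of $m^a$, then invoke the parity classification \eref{e:parity1}--\eref{e:parity4} from \theref{the:graph} to see that $\modes^A$ and $\modes^B$ are each preserved. Your reformulation via the single invariant ``$m^{\pi a}$ and $m^{\pi^2 a}$ have equal/opposite parity'' is a slightly cleaner packaging of what the paper expresses through the modified adjacency relations, and your explicit bridge edges are a welcome addition that the paper leaves implicit in its statement of those relations.
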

\begin{proof}
  Here, the complete node sets $\modes^A$ and $\modes^B$ are provided and it is sufficient to show that induction of nodes, generated by vectors $t_1^x$, $t_1^y$, $t_1^z$, and $t_2^a$, $t_3^a$, $t_4^a$ which correspond to non-zero effective coupling coefficients, leaves the node sets invariant. The first three vectors generate the sets $\modes^{1/2/3/4}$, discussed extensively in \sref{s:diagonal}. From the definition \eref{e:tai} it follows that induction in one step by the remaining three vectors inverts the parity of $m^a$ only, and does not change the parity of $m^{\pi a}$ and $m^{\pi^2a}$. By inspecting the parity structure of $\modes^i_n$, discussed in \theref{the:graph}, it follows that the addition of these vectors modifies the adjacency relations to $\adj^\pm(\modes^{i_a}_n)=\{ \modes^{i_a}_{n\pm1},\modes^4_{n\pm1}\}$,  $\adj^\pm(\modes^4_n)=\{\modes^4_{n\pm1}, \modes^{i_a}_{n\pm1}\}$, and similarly, $\adj^\pm(\{\modes^{i_{\pi a}}_n,\modes^{i_{\pi^2a}}_n\})=\{\modes^{i_{\pi a}}_{n\pm1},\modes^{i_{\pi^2}}_{n\pm1}\}$, from which the invariance requirement of $\modes^A$ and $\modes^B$ follows. Furthermore, disjointness and completeness of the graphs $\graph^A$ and $\graph^B$ follows from from the corresponding properties of $\graph^i$ and $\oplus_i\graph^i$, proved in \theref{the:graph}.
\end{proof}

\subsection{Isotropic diagonal couplings}\label{s:case}
A rather different situation is encountered when $\oJ$ is diagonal and isotropic, as in
\begin{equation}\label{e:Jiso}
  \oJ=\mathrm{diag}( \oJ_\perp,\oJ_\perp,\oJ_z )\,.
\end{equation}
Now we have $\oW_x=\oW_y=\oW_z=\oK_z=0$. Since $\sum_i\oK_i=0$, $\oK$ must be of the form 
\begin{equation}
  \oK = (-\kappa,\kappa,0)\,,
\end{equation}
where $\kappa=\oJ_z-\oJ_\perp$. Since $\kappa$ is the only coupling coefficient appearing in each term on the right-hand side of \eref{e:graph}, it can be scaled to $\pm1$ by an additional transformation of the time variable. Since the evolution is unitary, we may set $\kappa=+1$, and thus
\begin{eqnarray}\label{e:gspec}
  \fl \partial_\tau f_n(m) = m^xf_{n+1}(m^x-1,m^y+1,m^z+1)- m^yf_{n+1}(m^x+1,m^y-1,m^z+1) \nonumber\\
 \fl  + m^z\big[m^x f_{n-1}(m^x-1,m^y+1,m^z-1) - m^yf_{n-1}(m^x+1,m^y-1,m^z-1)\big]\,.
\end{eqnarray}
Since each term on the right-hand side of \eref{e:gspec} is multiplied by $m^x$ or $m^y$, the left-hand size of \eref{e:gspec} is zero whenever $m$ is of the form $m_n=(0,0,n)$. As a result, $f_n(m_n)$ is a constant of motion for each $n=1,2,\dots, N$.

To determine the remaining invariant subspaces, it is instructive to study a graph that corresponds to $\partial_\tau^2 f = \liouv^2 f$. A straightforward differentiation of \eref{e:gspec} results in
\begin{eqnarray}
  \fl\partial_\tau^2 f_n(m) =&-(2m^z+1)\bigl\{C_0f_n(m)-C_xf_n(m+\tpe)-C_yf_n(m-\tpe)\bigr\} \nonumber\\
  \fl &+C_x\bigl\{f_{n+2}(m+\tpe+\tpi)+C_zf_{n-2}(m+\tpe-\tpi)\bigr\}\nonumber\\
  \fl &+C_y\bigl\{f_{n+2}(m-\tpe+\tpi)+C_zf_{n-2}(m-\tpe-\tpi) \bigr\} \nonumber\\
  \fl &-C_0\bigl\{f_{n+2}(m+\tpi)+C_zf_{n-2}(m-\tpi)\bigr\}\,. \label{e:gspec2}
\end{eqnarray}
where 
\numparts\begin{equation}
  \fl C_a(m)=m^a(m^a-1)\quad a\in\{x,y,z\},\qquad C_0(m)=2m^xm^y+m^x+m^y\,,
\end{equation}
and where
\begin{equation}\label{e:tpie}
  \tpe=(-2,2,0) \qquad \tpi=(0,0,2).
\end{equation}\endnumparts
As in previous sections, we associate a graph $\graph=(\nodes,\edges)$ to the operator $\liouv^2$ on the right-hand side of \eref{e:gspec2}, and determine the disconnected subgraphs of $\graph$ by following all possible paths, induced by \eref{e:tpie} from various initial elements $m$. For example, all changes in $m^z$ are provided by $\pm\tpe$, which implies that the node set of any disconnected subgraph $\graph^i\subset\graph$ is such that $m^z$ is either an even or an odd integer from an interval $[m^z_{\mathrm{min}},N]$, where $0\le m^z_{\mathrm{min}}<N$ is the smallest possible value of $m^z$ of a particular subgraph. Moreover, if $\graph^i$ is a disconnected subgraph with $m^z=0,2,\dots$, then another graph $\graph'^{i}$, which differs from $\graph^i$ only by replacing $m^z$ by $m^z+1$, is also a disconnected subgraph of $\graph$: $\graph'^{i}\subset\graph$.

The ``horizontal'' induction of nodes (by paths that do not change $m^z$, nor $\abs{m}$) is obtained by applying $m\mapsto m\pm\tpe$ iteratively with the usual constraint $m\in\nodes$. Four cases are possible, depending on the parity of $m^x$ and $m^y$. Define $\kodes_{1,kn}, \kodes_{2,kn}\subseteq\nodes_{n+2k+1}$, $\kodes_{3,kn}\subseteq\nodes_{n+2k}$, and $\kodes_{4,kn}\subseteq\nodes_{n+2k+3}$ by
\numparts\begin{eqnarray}
\fl\kodes_{1kn}=\bigcup_{j=0}^k (1+2j,2k-2j,n)\,, \quad & \kodes_{2kn}=\bigcup_{j=0}^k (2k-2j,1+2j,n) \,, \label{e:kodes-1}\\
\fl\kodes_{3kn}=\bigcup_{j=0}^k (2k-2j,2j,n) \,,        & \kodes_{4kn}=\bigcup_{j=0}^k (1+2k-2j,1+2j,n) \,.\label{e:kodes-2}
\end{eqnarray}\endnumparts
Each of the sets $\kodes_{ikn}\subseteq\nodes_{n'}$ is invariant with respect to the horizontal induction. 

Define even and odd node sets by combining the corresponding $\kodes_{ikn}$ as follows
\begin{equation}
  \Kodes^{e}_{ik} = \bigcup_{n=0}^{N/2} \kodes_{ik,2n} \qquad \Kodes^{o}_{ik} = \bigcup_{n=0}^{N/2} \kodes^3_{k,2n+1}\,.
\end{equation}
By construction, $\Kodes^{e/o}_{ik}$ are invariant with respect to node induction by \eref{e:tpie}, therefore a family of induced graphs, parametrized by $e/o$, $i=1,2,3,4$ and $k=0,1,\dots, N/2$ and defined by
\begin{equation}\label{e:ON}
  \fl\graph^{e/o}_{ik} = (\Kodes^{e/o}_{ik}, \edges^{e/o}_{ik}) = \mathrm{ind}(\graph, \Kodes^{e/o}_{ik}) \quad k=0,1,\dots, N/2, \quad i=1,2,3,4
\end{equation}
are disconnected subgraphs of $\graph$. 

The range of the parameter $k$ in \eref{e:ON} implies that the number of disconnected subgraphs of $\graph$ scales as $O(N)$ and consequently, the same scaling holds for the number of invariant vector spaces $\Hilb^{e/o}_{ik,N}$, associated to each disconnected subgraph $\graph^{e/o}_{ik}$.

Associated to each $\kodes_{ikn}$ there is an $n'$-spin vector space $\hilb_{n'}$, where $n'=n+2k+1$ if $i=1,2$, $n'=n+2k$ if $i=3$, and $n'=n+2k+3$ if $i=4$. By definition \eref{e:kodes-1}--\eref{e:kodes-2}, $\kodes_{ikn}$ has $k+1$ elements, thus each $\hilb_{n'}$ subspace is $k+1$ dimensional. As a result, the operator $\liouv^2$ associated to a disconnected subgraph $\graph^{e/o}_{ik}$ has a block structure where all the blocks are square $(k+1)\times (k+1)$ size matrices, and where $k$ is fixed.

\subsection{Dependence on initial conditions}
In the preceding sections it was demonstrated that, depending on the structure of $\oJ$, connectivity properties of $\graph$ vary. They may be summarized as follows.
\begin{equation}
  \graph = \bigcup_{\alpha_i\in\alpha} \graph^{\alpha_i}\,,
\end{equation}
where $\alpha=\{\alpha_1,\alpha_2,\dots \}$ is a set of partition indices and where $\graph^{\alpha_i}=(\nodes^{\alpha_i},\edges^{\alpha_i})$ is a connected graph for each $\alpha_i\in\alpha$. If $\oJ$ is a diagonal anisotropic matrix as defined in \sref{s:diagonal}, $\alpha=\{1,2,3,4\}$; if $\oJ$ is diagonal anisotropic and, in addition, one off-diagonal element of $\oJ$ is non-zero, as defined in \sref{s:offdiag}, $\alpha=\{A,B\}$. Lastly, if $\oJ$ is a diagonal isotropic matrix then, as discussed in \sref{s:case}, $\alpha=\cup_k\cup_j (e/o,j,k)$.

A collection of components $f(m)$, $m\in\nodes^{\alpha_i}$ was viewed as a vector in a vector space $\Hilb^{\alpha_i}_N\subseteq\Hilb_N$. Connectendess of a graph $\graph^{\alpha_i}$ implies that the subset of {BBGKY} equations for the evolution of vectors in $\Hilb^{\alpha_i}_N$ involves only vectors in $\Hilb^{\alpha_i}_N$ and therefore $\Hilb^{\alpha_i}_N$ is an invariant subspace of $\Hilb_N$. It follows that
\begin{equation}\label{e:liouvdecomp}
  \liouv = \bigoplus_{\alpha_i\in\alpha}\liouv^{\alpha_i}\,.
\end{equation}
where $\liouv$ is the generalized Liouville operator such that $\liouv^{\alpha_i}: \Hilb^{\alpha_i}_N\mapsto\Hilb^{\alpha_i}_N$ acts on each invariant subspace.

Almost every $\graph^{\alpha_i}$ (the only exception is $\alpha=(e/o,3,k)$ in \sref{s:case}) slices across the entire graph $\graph$, meaning that $\graph^{\alpha_i}$ describes tensor coefficients from $n$-spin vector subspace with all $n_{\mathrm{min}}\le n\le N$ (possibly $n$ is constrained to either even or odd values) where $n_{\mathrm{min}}$ is some smallest order spin vector space $\hilb_{n_{\mathrm{min}}}$ of that subspace $\Hilb_N^{\alpha_i}$. As a result, the initial value problem \eref{e:iv} in each $\Hilb^{\alpha_i}_N$ has to be solved to all orders in the expansion coefficients $f_n$ in general.

From \eref{e:liouvdecomp} it follows that dynamics a-priori depends on the initial conditions. For example, if the initial condition is such that the only non-zero components of $f$ are in one of the invariant subspaces $\Hilb^{\alpha_i}_N$, then the properties of equilibration and the equilibrium state will be determined by, respectively, the spectrum and the null-space of $\liouv^{\alpha_i}$. Both these properties may depend on $\alpha_i$ and thus on the initial state.

\section{Conclusions}\label{s:conclusions}
Aspects of the dynamics of the anisotropic Curie-Weiss quantum Heisenberg model are addressed by studying the structure of its equations of motion. This study is based on the {BBGKY} theory and a special expansion of reduced density operators in a basis of Pauli operators, and it builds on a recent work where the coefficient {BBGKY} hierarchy \eref{e:bbgky2} was derived \cite{paskauskas2012a}.

This work was motivated by a special case of coupling constants \cite{paskauskas2012a}, where time dependence of a certain subset of spin-spin correlators could be determined analytically to all orders, allowing to determine the evolution of a class of observables. It was hoped that the method of solution could be extended to all observables, initial conditions, and more general coupling constants. As in became apparent that a special symmetry of the coefficient vector space was behind the analytic solution, a comprehensive study of symmetries of a general case was called for.

To examine the structure of the coefficient vector space, the {BBGKY} hierarchy was represented by a graph, whose vertices are the expansion coefficients, and edges are couplings in the {BBGKY} hierarchy. It was shown that, under certain conditions, this graph is disconnected. A disconnected graph implies that certain groups of expansion coefficients are causally unrelated. To each such group one can associate an invariant subspace of the underlying vector space. It therefore follows that the vector space is partitioned into invariant subspaces, and that this partition depends on the structure of the matrix of effective coupling constants. Three scenarios of partition were identified in addition to the most general case: If $\oJ$ is a diagonal anisotropic matrix, there are four invariant subspaces. If in addition $\oJ$ has one non-zero off-diagonal element, there are two invariant subspaces. In the most general case, the vector space is irreducible, which corresponds to $\oJ$ that is diagonally anisotropic and has two or three non-zero off-diagonal elements. Lastly, it was found that, if $\oJ$ is diagonal and isotropic, the number of invariant subspaces scales as $O(N)$ with the size of the system $N$.

In addition it was shown that, to make the evolution in the coefficient vector space unitary, it is necessary to scale the components of vectors by certain weight factors, which is equivalent to redefining the conventional inner product, where the products of vector components are multiplied by the squares of these factors.

This work provides several insights about symmetries of the Hilbert space of a simple quantum model. It is believed in this work that similar symmetries are present in a class of more realistic lattice spin models, where interactions among particles depend on the distance between them. It is also believed that these symmetries should be taken into consideration when low-dimensional approximations, such as kinetic theories, are constructed.

\appendix

\section{Derivation of~\eref{e:v-} and~\eref{e:v+}}\label{s:derivation}
Using permutation symmetry of the tensors $f_n^a$, components of $v_n^\pm$ defined by \eref{e:v-}--\eref{e:v+} can be written as
\numparts\begin{eqnarray}
(v_n^+)^{(a_1\dots a_n)} = \sum_{uvw\in \setI} \sum_{i=1}^n \eps^{a_iuv}(-J^{uw}) f_{n+1}^{wv \dots } \label{e:v+0} \\
(v_n^-)^{(a_1\dots a_n)} = \sum_{uv\in \setI} \sum_{\scriptstyle i,j=1\atop\scriptstyle i\neq j}^n \eps^{a_iuv}(-J^{a_ju}) f_{n-1}^{v \dots } \label{e:v-0}
\end{eqnarray}\endnumparts
where the dots on the right-hand side denote the multiindex $(a_1\dots a_n)$ modified by: deleting $a_i$ in \eref{e:v+0}, and deleting both  $a_i$ and $a_j$ in \eref{e:v-}.

For $a\in \{x,y,z\}$, define $b_a=\pi a$ and $c_a=\pi^2a$. Define the permutation operator $\pi_a$ acting on a triple $(a,b_a,c_a)$ by $\pi_a (a, b_a, c_a) = (x,y,z)$. Due to the permutation invariance of $f_n$, the sums over $i$ and $j$ in \eref{e:v+0}--\eref{e:v-0} can be replaced by sums over coordinate labels, multiplied by the number of occurrences of that label in the multiindex $(a_1\dots a_n)$. For example in the term \eref{e:v+0} this multiplicity factor is equal to $m^a$, by definition \eref{e:setMn}. Passing to a notation $f_n^{a(m)}\to f_n(m)$ in the term $(v_n^+)^{(a_1,\dots,a_n)}$ and performing $u$ and $w$ summation explicitly, this term is rewritten as
\begin{eqnarray}\label{e:tmp0}
  (v_n^+)^{(a_1\dots a_n)} = \sum_{auvw\in\setI} m^a \eps^{auv}(-J_{uw}) f_{n+1}^{wv\dots} \nonumber \\
  \fl  = \sum_{aw\in\setI} m^a \big\{ (-J_{b_aw})f_{n+1}[\pi_a (m^a-1,m^{b_a}+\delta(b_a,w),m^{c_a}+1+\delta(c_a,w))]\nonumber\\
  \fl + J_{c_aw}f_{n+1}[\pi_a(m^a-1+\delta(a,w),m^{b_a}+1+\delta(b_a,w),m_c+\delta(c_a,w))]\big\}
\end{eqnarray}
where $\delta(a,b)$ is the Kroenecker symbol, acting on coordinate labels. In arriving at this expression, note that the term $m^a-1$ arises because  $a_i$ is deleted from the expression $f_{n-1}^{wv\dots}$, while the terms $m^{b_a}+1$ and $m^{c_a}+1$ arise from the $v$-summation. 

Performing the $w$-summation, and passing to notation $(v_n^+)^{(a_1\dots a_n)}\to v_n^+(m)$, we find 
\begin{eqnarray}\label{e:tmp1}
  \fl v_n^+(m) = \sum_a m^a\Big\{ -(J^{b_ab_a}-J^{c_ac_a})f_{n+1}[\pi_a(m^a-1,m^{b_a}+1,m^{c_a}+1)] \nonumber \\
  \fl - J_{b_ac_a}\big( f_{n+1}[\pi_a(m^a-1,m^{b_a},m^{c_a}+2)]-f_{n+1}[\pi_a(m^a-1,m^{b_a}+2,m^{c_a})]\big)  \nonumber\\
   - J_{b_ac_a}\big( m^{b_a}-m^{c_a}\big) f_{n+1}[\pi_a(m^a+1,m^{b_a},m^{c_a})]\Big\}
\end{eqnarray}
To arrive at this expression from \eref{e:tmp0}, a permutation invariance property was used: all the labels $a$, $b_a$, $c_a$ can be cyclically permuted without changing the value of the sum, and moreover $f[\pi_a(a,b_a,c_a)]=f[\pi_{b_a}(b_a,c_a,a)]$. For example
\begin{equation*}
  \fl \sum_a m^aJ_{c_aa}f_{n+1}[\pi_a(m^a,m^{b_a}+1,m^{c_a})]=\sum_a m^{c_a}J_{b_ac_a}f_{n+1}[\pi_a(m^a+1,m^{b_a},m^{c_a})]\,.
\end{equation*}
The final expression \eref{e:vplus} is obtained from \eref{e:tmp1} by identifying
\begin{equation}\label{e:KW}
  K_a=J_{b_ab_a}-J_{c_ac_a} \qquad W_a=J_{b_ac_a}
\end{equation}
The term \eref{e:v-0} is treated similarly, i.e. the sums over $i$ and $j$ are replaced by  sums over $u,v\in\setI$. Since the labels $a_i$ and $a_j$ may be equal for $i\neq j$ there are two cases of multiplicity factors to consider, $m^{a_i}(m^{a_i}-1)$ if $a_i=a_j$, and $m^{a_i}m^{a_j}$ if $a_i\neq a_j$. Separating these two cases, $v_n^-$ is written as 
\begin{eqnarray}
  \fl (v_n^-)^{(a_1\dots a_n)}=\sum_{auv\in\setI} m^a(m^a-1) \eps_{auv}(-J_{ua}) f_{n-1}^{v\dots}+\sum_{\scriptstyle auvw\atop \scriptstyle w\neq a} m^am^w \eps_{auv}(-J_{uw}) f_{n-1}^{v\dots} \nonumber\\
  \fl  =\sum_{a\in \setI} m^a(m^a-1)\Big\{J_{c_aa} f_{n-1}[\pi_a(m^a-2,m^{b_a}+1,m^{c_a})] \nonumber \\
  - J_{b_aa} f_{n-1}[\pi_a(m^a-2,m^{b_a},m^{c_a}+1)]\Big\} \nonumber\\
  \fl +\sum_{\scriptstyle a\in\setI\atop\scriptstyle w\in\{b_a,c_a\}} m^am^w\Big\{ J_{c_aw}f_{n-1}[\pi_a(m^a-1,m^{b_a}+1-\delta(b_a,w),m^{c_a}-\delta(c_a,w))] \nonumber \\
    - J_{b_aw} f_{n-1}[\pi_a(m^a-1,m^{b_a}-\delta(b_a,w),m^{c_a}+1-\delta(c_a,w))]\Big\}
\end{eqnarray}
In the second sum, performing the $w$-sum and after relabelling terms in the remaining $a$-sum, one finds  after passing to notation $(v_n^-)^{a(m)}\to v_n^-(m)$
\begin{eqnarray}\label{e:tmp2}
  \fl v_n^-(m)=\sum_{a\in\setI} \Bigl\{ (J^{b_ab_a}-J^{c_ac_a}) m^{b_a}m^{c_a} f_{n-1}[\pi_a(m^a+1,m^{b_a}-1,m^{c_a}-1)] \nonumber\\
  \fl + m^a(m^a-1)\bigl\{ J_{ac_a}f_{n-1}[\pi_a(m^a-2,m^{b_a}+1,m^{c_a})]-J_{ab_a}f_{n-1}[\pi_a(m^a-2,m^{b_a},m^{c_a}+1)]\bigr\} \nonumber \\
  + J_{b_ac_a} m^a \left(m^{b_a}-m^{c_a}\right) f_{n-1}[\pi_a(m^a-1,m^{b_a},m^{c_a})]\Bigr\}
\end{eqnarray}
The final form \eref{e:vminus} is obtained from \eref{e:tmp2} by substituting \eref{e:KW}. 

\vspace{3mm}
\bibliographystyle{unsrt}
\bibliography{Graphs.bbl}
\end{document}